\documentclass[12pt,a4paper]{article}

\usepackage[margin=2.5cm]{geometry}
\usepackage{xcolor}

\usepackage{setspace}
\usepackage{amsmath,amssymb,amsfonts}
\usepackage{amsthm}

\newif\iffast
\fastfalse          

\iffast
  \usepackage[draft]{graphicx}   
\else
  \usepackage{graphicx}
\usepackage{float} 
\fi
\usepackage{pdfpages}
\usepackage{booktabs}
\usepackage{dcolumn}
\usepackage{float}        
\usepackage{caption}
\usepackage{subcaption}
\usepackage{multirow}
\usepackage[title]{appendix}
\usepackage{textcomp}
\usepackage{listings}
\usepackage{authblk}
\usepackage[hidelinks]{hyperref}
\hypersetup{
  colorlinks=true,
  linkcolor=blue,
  urlcolor=blue,
  citecolor=blue
}
\newcommand{\DOI}[1]{\href{https://doi.org/#1}{https://doi.org/#1}}

\newtheorem{lemma}{Lemma}

\begin{document}

\title{Heart Failure's First Shock and Nurse-Led Chronic Care}

\author{
Moslem Rashidi\thanks{Corresponding author.\ Department of Economics, University of Bologna, Piazza Scaravilli, 40126 Bologna, Italy (email: moslem.rashidi2@unibo.it).} \quad
Luke Brian Connelly\thanks{Centre for the Business and Economics of Health, The University of Queensland, St Lucia, QLD 4072, Australia (email: l.connelly@uq.edu.au), and Department of Sociology and Business Law, University of Bologna (email: luke.connelly@unibo.it).} \quad
Gianluca Fiorentini\thanks{Department of Economics, University of Bologna, Piazza Scaravilli, 40126 Bologna, Italy (email: gianluca.fiorentini@unibo.it).}
}

\date{\today}

\maketitle

\begin{abstract}

We study how a first heart-failure hospitalization—an adverse health shock—changes patients’ care, and whether a nurse-led chronic-care program sustains those post-shock investments. Using linked population-wide administrative records from Italy’s Romagna Local Health Authority (2017–2023), we anchor event time at each patient’s first CHF admission and exploit staggered timing to estimate dynamic effects. The shock triggers a sharp post-discharge surge: beta-blocker adherence, cardiology follow-up, and echocardiography rise immediately, while emergency-room use spikes just before admission and then stabilizes. We then estimate the incremental impact of enrollment in the Nurse-led Program for Chronic Patients (NPCP) using the interaction-weighted event-study estimator for staggered adoption. Under conventional diff-in-diffs inference, NPCP strengthens long-run preventive engagement, with little detectable change in emergency-room use. HonestDiD sensitivity analysis indicates these gains are economically meaningful but not statistically definitive under modest departures from parallel trends.
\end{abstract}

\begin{center}

{{\bf Keywords}: Nurse-led chronic care, Congestive heart failure, Preventive care and ER visits, Staggered timing of events, HonestDiD} 
\bigskip

{\bf JEL classification}: I18, I11, C23, H51

\end{center}

\vfill

\thispagestyle{empty}


\section{Introduction}
\label{sec:Intro}

Major health events often lead individuals and families to reassess preventive behavior. After a severe family health shock, spouses and adult children increase their use of preventive care, with large, persistent responses (Fadlon and Nielsen, 2019). After a new diagnosis such as diabetes, households reduce calorie purchases, mainly by cutting unhealthy items (Oster, 2018). These patterns fit models in which a shock reveals information about disease risk, raises health salience, and induces a re-optimization of health investments (Grossman, 1972; Hoagland, 2025). Consistent with this mechanism, Hoagland (2025) shows that new chronic diagnoses can generate persistent increases in household healthcare use, including preventive care that remains elevated for years, while Darden (2017) finds that major chronic health shocks induce an immediate reduction in smoking, although his analysis does not establish whether that response later persists or attenuates.

This raises an important economic question: can structured post-shock care sustain or amplify the preventive responses triggered by a health event? Without support, patients and families may underinvest in prevention once the immediate shock recedes, accelerating disease progression and increasing costly episodes. In a Grossman-style health-capital framework, medication adherence, specialist follow-up, and diagnostic monitoring are inputs into future health. If patients face information, coordination, or attention frictions after discharge, these inputs may be chosen suboptimally. A nurse-led chronic-care program may reduce those frictions through education, information reinforcement, and active monitoring.

We study this question in congestive heart failure (CHF), a chronic and progressive condition with high morbidity and mortality. CHF patients often experience recurrent acute decompensations, and nearly one-fourth of hospitalized patients are readmitted within 30 days (Groenewegen et al., 2020). In Emilia-Romagna, Italy, the Nurse-led Program for Chronic Patients (NPCP) assigns specialist nurses to support discharge planning, education, self-monitoring, and proactive follow-up by phone or home visits. The program aims to reduce informational, behavioral, and coordination barriers that limit adherence, specialist care, and the early recognition of worsening symptoms. Prior evidence on heart-failure disease management suggests benefits for some outcomes, but evidence on readmissions is mixed. Large trials such as Tele-HF and BEAT-HF found no significant reduction in 180-day post-discharge readmission outcomes (Chaudhry et al., 2010; Ong et al., 2016). Consistent with this mixed pattern, the Cochrane review by Inglis et al. (2015) reports benefits for all-cause mortality and heart-failure-related hospitalizations, but not clear reductions in all-cause hospitalization.

Using linked administrative data from the Romagna Local Health Authority (2017--2023), we examine whether NPCP improves preventive engagement after a first CHF hospitalization. The data follow patients over time and record prescriptions, specialist visits, echocardiograms, and emergency room (ER) visits. Our design exploits two features. First, the first CHF hospitalization is a sharp, clinically meaningful shock with idiosyncratic timing. Second, patients enroll in NPCP at different dates relative to that shock and remain enrolled thereafter. We therefore define event time in calendar quarters around the first CHF admission and compare outcomes for enrolled patients with those of otherwise similar CHF patients who experience the same shock but do not enroll. This staggered-adoption difference-in-differences design allows us to trace the dynamic effects of NPCP on medication adherence, specialist follow-up, diagnostic monitoring, and ER use.

Because enrollment is rare in our CHF sample, selection into NPCP is central to interpretation. The institutional setting nonetheless supports a causal interpretation. Enrollment is initiated by the patient’s general practitioner (GP) and is shaped by supply-side constraints. Some GPs are more likely to refer eligible patients, and districts differ in their enrollment capacity. GP choice is not frictionless, since capacity limits make switching difficult, and rollout intensity varies across districts and over time. Consistent with this setting, we document differences in enrollment propensity across both GPs and districts. Thus, participation does not appear to reflect only patient motivation or baseline severity; access also depends on provider practice style and local capacity, which patients cannot easily modify in the short run.

Our contribution is both substantive and methodological. Substantively, we show how a major health shock and a nurse-led care model jointly shape the production of health investments. In a Grossman-style framework, medication adherence, physician visits, and diagnostic tests are inputs chosen after the shock, and we find that NPCP raises these inputs in the long run. We find no statistically detectable net effect on ER use, consistent with offsetting mechanisms: better disease control may reduce severe episodes, while closer monitoring and triage may increase precautionary ER visits. Methodologically, we implement modern event-study estimators that are robust to treatment-effect heterogeneity under staggered adoption (Sun and Abraham, 2021; Callaway and Sant’Anna, 2021).

The remainder of the article is organized as follows. Section \ref{sec:Background and Institutional Setting} describes the institutional context of NPCP. Section \ref{sec:Economic Framework} presents the economic framework. Section \ref{sec:Data and Identification} describes the data, identification strategy, and econometric approach. Section \ref{sec:Results} reports the main findings. Section \ref{sec:Conclusion} concludes.

\section{Background and Institutional Setting}
\label{sec:Background and Institutional Setting}

In the Romagna Local Health Authority (AUSL Romagna), Italy, the Nurse-led Program for Chronic Patients (NPCP) for chronic heart failure (CHF) was introduced in 2017 through collaboration among specially trained nurses, general practitioners (GPs), and specialists. The program is delivered mainly through in-person outpatient care within the existing primary-care network, either in Hub Community Health Centers (CHCs) managed by the local health authority or in Primary Care Groups (PCGs) run by groups of GPs. Its rollout was gradual: by 2022, only some hubs and structured PCGs had the nursing staff required to operate the program, while implementation in other districts occurred later.

Within each NPCP clinic, a chronic-care nurse serves as case manager. The nurse conducts an initial assessment, identifies patient needs, and develops a personalized care plan under GP supervision and in line with clinical guidelines. The nurse also provides health education, promotes self-care and healthy lifestyles, recalls patients for follow-up, and monitors key clinical indicators, including symptoms and weight, through scheduled contacts such as telephone calls. The GP remains the patient’s main treating physician and retains responsibility for evaluation, prescribing, medication adjustment, and referrals for tests or specialist consultations. When needed, a local cardiologist can be involved at the GP’s request. The program therefore creates a coordinated multidisciplinary pathway linking nurses, GPs, specialists, and other providers.

Enrollment is initiated by the GP based on explicit clinical criteria. Regional guidelines indicate referral for patients with a recent CHF hospitalization or diagnostic evidence of substantial cardiac dysfunction. Once enrolled, patients receive structured nurse-led follow-up, monitoring, and education. Patients not referred remain under usual GP and specialist care without dedicated nursing case management. Relative to standard care, NPCP’s distinguishing feature is proactive, coordinated, and continuous management rather than reactive, symptom-driven care.

\section{Economic Framework}
\label{sec:Economic Framework}
This section develops a parsimonious dynamic disease-management framework for three patterns in our setting: how a first CHF hospitalization changes post-discharge management, why initial improvements may fade, and how enrollment in the Nurse-led Program for Chronic Patients (NPCP) can sustain health investments and affect utilization beyond acute care. We anchor event time at the first CHF hospitalization and treat post-discharge management as a repeated investment problem aimed at future clinical stability.

In a Grossman-style model, health is a durable stock that yields utility and productive time, depreciates over time, and can be increased through investment (Grossman, 1972). In our setting, the relevant investments observed in administrative data are medication adherence, timely outpatient follow-up, and diagnostic monitoring. These actions require time, effort, and coordination, but improve expected future health and reduce the risk of acute deterioration. A first CHF hospitalization is a large adverse shock that lowers health and raises the return to stabilization. Thus, even without NPCP, patients may rationally increase management effort immediately after discharge. This is consistent with evidence that major health shocks increase preventive investments among affected households (Fadlon and Nielsen, 2019). Throughout, we use ``salience'' only as a shorthand for time-varying incentives and effective costs around discharge, not as a departure from rational choice.

ER visits are not preventive inputs but downstream outcomes reflecting acute episodes and triage decisions given evolving clinical stability. Better preventive management can reduce later acute events, but ER use need not decline monotonically because monitoring intensity, access to guidance, and escalation thresholds also shape whether symptoms lead to ER care.

The central challenge is sustaining investment over time. In a standard dynamic framework, the shadow price of health investment includes information frictions, time constraints, effort, and coordination burdens (Grossman, 1972). For CHF patients, these barriers may be lower just after hospitalization, when perceived risk and clinical attention are high, but rise as symptoms stabilize and routine contact declines. Preventive effort may therefore partially revert toward baseline. This is consistent with evidence that adherence to secondary-prevention therapies declines after acute cardiovascular events (Bahit et al., 2023).

NPCP is designed to reduce these frictions. As discussed in Section \ref{sec:Background and Institutional Setting}, it assigns a chronic-care nurse who provides case management, education, proactive follow-up, monitoring, and coordination with the GP and, when needed, with a territorial cardiologist. In the model, NPCP lowers the effective cost of sustained investment and can raise its productivity through earlier responses and better-organized care. We therefore expect higher adherence, more follow-up, and more monitoring after enrollment, with larger treated--control gaps in later periods as preventive effort among untreated patients attenuates. By contrast, the effect on ER use is ambiguous ex ante: better disease control may reduce deterioration-driven demand, while closer monitoring and triage may increase precautionary ER visits (Taubman et al., 2014; Chaudhry et al., 2010; Ong et al., 2016). Our empirical design tests these predictions around the first CHF hospitalization (see Appendix A (\ref{app:structural_model_short}).


\section{Data and Identification}
\label{sec:Data and Identification}

We use population-wide administrative health records from the Romagna Local Health Authority (LHA) covering all reimbursed healthcare utilization and outcomes from 2017 through 2023. Using a stable anonymized patient identifier, we link individual-level information on hospital admissions, outpatient specialist care, pharmaceutical dispensing, diagnostic testing, and the mortality registry. This linkage provides near-complete longitudinal follow-up within the LHA catchment area: attrition is negligible, with only 96 patients (0.54\% of 17,835) exiting before the end of 2023.

We initially identify 17,835 individuals diagnosed with congestive heart failure (CHF) during 2017--2023. For causal analysis, we restrict attention to patients who experience at least one acute hospitalization for CHF decompensation, yielding a final analytic sample of 15,703 patients; the remaining 2,132 never experience a first CHF hospitalization during the observation window. This restriction strengthens internal validity for three reasons. First, it aligns treated and untreated patients at a common, clinically meaningful severity threshold: an acute decompensation requiring hospitalization. This avoids comparing NPCP enrollees with milder CHF cases that never reach hospital care. Second, it reduces unmeasured baseline heterogeneity that observable risk adjusters, including the Multidimensional Comorbidity Score (MCDS) (Iommi et al., 2020), may not fully capture. Patients who never require hospitalization may differ in unobserved dimensions such as ventricular function, caregiver support, or diagnostic intensity that affect both enrollment and outcomes. Third, it provides a sharp and common clinical benchmark---the first CHF hospitalization---that anchors event time in our event-study design.

We treat the first hospitalization for CHF decompensation as a plausibly exogenous health shock in the sense that its timing is quasi-random within a short window. Acute decompensation admissions are typically triggered by abrupt deterioration and an emergency presentation, making their exact timing difficult for patients to anticipate or manipulate. To strengthen this interpretation, we define the benchmark as an index CHF admission preceded by a three-year clean period with no prior CHF decompensation hospitalizations. This reduces concern that the benchmark is mechanically part of an ongoing recent sequence of acute episodes and limits the possibility that outcomes are already shifting because of a continuing decompensation cycle. Related work uses closely related timing-based strategies: Dobkin et al. (2018) study sudden hospital admissions, while Fadlon and Nielsen (2019, 2021) construct counterfactuals from households that experience the same severe shock later. In our setting, the identifying assumption is that, conditional on baseline health risk and local healthcare supply conditions, the quarter of the index admission is quasi-random when comparing adjacent quarters within the event-study window.

The final analytic sample therefore comprises 15,703 CHF patients. Of these, 494 enroll in the Nurse-led Program for Chronic Patients (NPCP) by the end of the study period, while 15,209 never enroll and serve as untreated controls.

Because enrollment is rare in our CHF cohort (about 3\%), selection into NPCP is central to interpretation. We therefore provide institutional and empirical evidence that participation is shaped importantly by supply-side constraints rather than by patient demand alone. First, patients do not have unconstrained choice of general practitioner (GP). They can select only among local GPs with available capacity, and GP panels are typically capped at around 1,500 patients, with limited scope for temporary or exceptional increases (SISAC, 2018; Conferenza Stato-Regioni, 2024). During periods or in areas with GP shortages, the effective GP choice set is therefore narrow and switching may be costly or delayed. Consistent with this, regional saturation is substantial: according to a summary of SISAC-based figures reported by the provincial medical board, 57.6\% of Emilia-Romagna GPs exceed the 1,500-patient threshold and the estimated regional shortfall is 536 GPs, implying that closed or full lists are common in shortage areas (Fondazione GIMBE, 2025). Second, enrollment flows vary sharply across districts and over time. In several districts, enrollment remains close to zero for long stretches and then rises discretely, a pattern more consistent with delayed rollout or step changes in local implementation capacity than with gradual changes in patient health literacy or social networks (Figure~\ref{Fig:scatter_data}). Third, we document substantial GP-level heterogeneity in enrollment propensity. Using a leave-one-out measure of each GP’s tendency to enroll eligible patients (Goldsmith-Pinkham et al., 2025; Chyn et al., 2025), we find that treated patients are associated with GPs whose enrollment propensities are much higher on average (14.9\%) than those associated with control patients (2.7\%; $p<0.001$) (Table~\ref{tab:gp_leniency_loo}). Taken together, these patterns indicate that access depends importantly on district-level implementation capacity and GP practice style, not solely on patient selection.

Supply-side constraints, however, do not by themselves rule out baseline differences between enrollees and non-enrollees. We therefore report balance in Table~\ref{tab:Sum2b} and use those comparisons to interpret the pre-period level gaps visible in Figure~\ref{fig:scatter1}. Enrolled patients are younger and more often male, while baseline clinical complexity is only modestly lower among enrollees. By contrast, deprivation is essentially identical across groups. These differences are economically and clinically meaningful: younger and less complex CHF patients may face fewer contraindications, fewer treatment interruptions, and lower organizational burdens, making sustained chronic-care management easier.

Our event-study is anchored at each patient’s first CHF hospitalization ($\tau = 0$), while outcomes are measured quarterly and hospitalization and NPCP enrollment may occur in different quarters. This creates transition quarters between the two events. For patients who enroll after hospitalization, the quarters from the hospitalization quarter through the quarter before enrollment are post-shock but pre-program, so outcomes in that interval reflect the acute hospitalization response, discharge planning, and recovery dynamics rather than exposure to NPCP. For patients who enroll before hospitalization, the quarters from the enrollment quarter through the hospitalization quarter are already post-enrollment but pre-shock, so retaining them would place treated observations into the pre-hospitalization comparison period. The intuition for gap censoring is therefore to avoid assigning these transition quarters to either the untreated or treated state. If we failed to censor them, the estimated NPCP effect would be contaminated either by shock-driven changes unrelated to the program or by already-treated observations entering the baseline period. This would blur the distinction between the common hospitalization response and the incremental effect of NPCP, potentially generating spurious short-run effects or artificial pre-trends. For this reason, we deterministically set these transition-gap quarters to missing based only on observed timing, so that identification comes only from quarters that are clearly pre-enrollment or post-enrollment.

We study outcomes chosen ex ante because they are measured symmetrically around the index hospitalization and map directly to the investment-versus-acute-use margins emphasized by a Grossman-style framework (Grossman, 1972) and CHF management guidelines (Takeda et al., 2012). Medication adherence is an indicator equal to one if the patient’s quarterly proportion of days covered (PDC) for $\beta$-blockers (ATC code C07) exceeds 75\%. Cardiology visits, echocardiograms, and emergency room (ER) visits are each defined as indicators for at least one occurrence in the quarter (Table~\ref{tab:Sum1a}).

\begin{table}[htbp]
\centering
\captionsetup{skip=10pt} 
\caption{Summary statistics for quarterly outcomes, including the number of patient--quarter observations, unique patients, means, within- and between-patient standard deviations, overall standard deviations, and ranges.}
\label{tab:Sum1a}
\resizebox{\textwidth}{!}{%
    
\begin{tabular}{lrrrrrrr}
\hline
\textbf{Variable} & \textbf{Total Obs.} & \textbf{Unique Patients} & \textbf{Mean} & \textbf{SD (Within)} & \textbf{SD (Between)} & \textbf{Total SD} & \textbf{Range} \\
\hline
Adherence to beta-blockers (PDC $\geq$ 75\%) & 187,513 & 15,665 & 0.35 & 0.23 & 0.42 & 0.48 & [0, 1] \\
Cardiology visits                  & 187,513 & 15,665 & 0.11 & 0.27 & 0.17 & 0.32 & [0, 1] \\
Echocardiogram performed                & 187,513 & 15,665 & 0.04 & 0.18 & 0.08 & 0.20 & [0, 1] \\
ER visits                          & 187,513 & 15,665 & 0.17 & 0.34 & 0.14 & 0.37 & [0, 1] \\
\hline
\end{tabular}
}
\begin{flushleft}
\footnotesize \textit{Notes:} These outcomes are defined at the patient-quarter level and observed before the first CHF hospitalization (event time $ \tau = 0 $).
\end{flushleft}
\end{table}

To visualize dynamics and assess identification, Figure~\ref{fig:scatter1} plots moving-average trajectories from 12 quarters before to 12 quarters after the first CHF hospitalization, with event time normalized to the index hospitalization quarter ($\tau = 0$). In the clean pre-period ($\tau \le -2$), treated and untreated patients follow closely parallel paths in adherence, cardiology visits, echocardiograms, and ER use, supporting the parallel-trends assumption. At $\tau = -1$, ER admissions rise sharply for both groups, consistent with shared acute deterioration and pre-admission processing before the index hospitalization. We therefore treat $\tau = -1$ as a transitional quarter rather than a clean pre-period, since utilization may mechanically reflect deterioration-to-admission workflows (including bundled pre-admission work-up) and, for patients who enroll before the index admission, partial NPCP exposure. Accordingly, we use $\tau = -2$ as the omitted/reference period and assess pre-trends over $\tau \le -2$. A related concern is that movements around $\tau = -1$ may reflect selection into NPCP based on initial-shock severity. Two features of the data make this unlikely. First, 121 of the 494 eventual enrollees enter NPCP before the first hospitalization, so fluctuations at $\tau = -1$ cannot reflect anticipatory responses to the index-hospitalization shock for this subgroup. Second, among patients who enroll only after the first hospitalization ($N = 373$), the moving-average trajectories show no pronounced differential spike at $\tau = -1$ in cardiology visits or echocardiograms relative to controls (Figure~\ref{Fig:movAveexcl121}), and enrollment is only occasionally contemporaneous with the index hospitalization; it usually occurs in later quarters, often after multi-quarter lags. This timing, together with the absence of a sharp differential spike at $\tau = -1$, is inconsistent with a mechanism in which pre-index shock intensity drives immediate enrollment; instead, enrollment is more consistent with the supply-side implementation and access constraints documented above.

\begin{figure}[htbp]
  \centering
  \includegraphics[width=\textwidth]{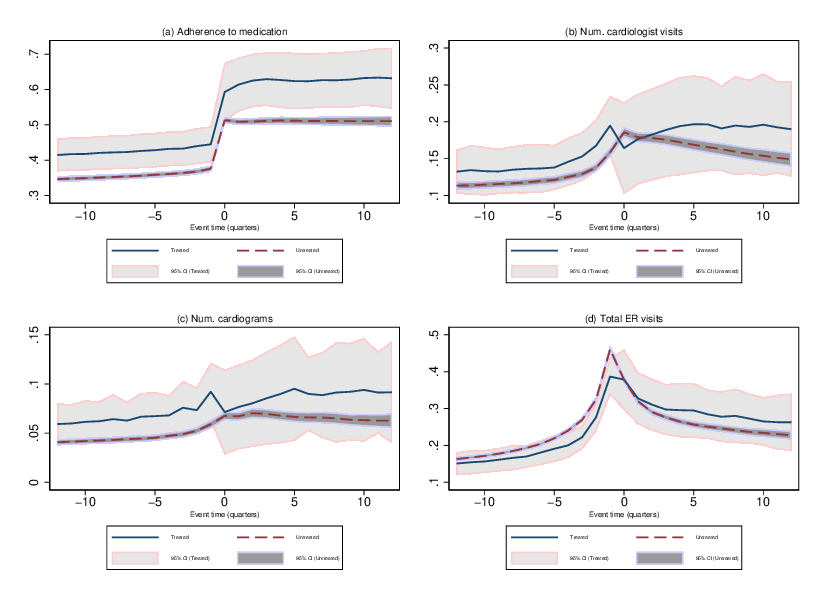}
\caption{Moving-average event-time trajectories around the first CHF hospitalization (\(t=0\)) for panels (a)--(d): \(\beta\)-blocker adherence, cardiologist visits, echocardiograms, and ER visits. Shaded bands show 95\% CIs over \(\pm 12\) quarters.}
  \label{fig:scatter1}
\end{figure}

Appendix~C~\ref{App:DES_Graphs} reports stratified trajectories by age, gender, and baseline clinical complexity (MCDS). These are clinically and economically salient sources of heterogeneity in vulnerability and in the capacity to adjust preventive management after a major health shock. Because treated and untreated patients are nearly identical in the deprivation index at baseline (difference of less than $0.01$ SD; Table~\ref{tab:Sum2b}), further deprivation stratification adds little interpretive value and would reduce precision.

\begin{table}[htbp]
\centering
\caption{Baseline characteristics and pre-hospitalization balance. Panel A: overall means. Panel B: by NPCP enrollment with SMD. Deprivation: 1=least, 5=most. MCDS: Multisource Comorbidity Score (higher = worse).}
\label{tab:Sum2}

\begin{subtable}[t]{\textwidth}
\centering
\caption*{Panel A. Overall baseline distributions}
\resizebox{\textwidth}{!}{%
\begin{tabular}{lrrrrr}
\toprule
\textbf{Variable} & \textbf{Total Obs.} & \textbf{Unique Patients} & \textbf{Mean} & \textbf{SD (Between)} & \textbf{Range} \\
\midrule
Sex (Female = 1)         & 188{,}436 & 15{,}703 & 0.52 & 0.50 & [0, 1] \\
Age (years)              & 188{,}436 & 15{,}703 & 79.57 & 10.84 & [16, 112] \\
Comorbidity Index (MCDS) & 188{,}436 & 15{,}703 & 4.02 & 1.58 & [1, 6] \\
Deprivation Index (1--5) & 177{,}741 & 14{,}814 & 2.90 & 1.45 & [1, 5] \\
\bottomrule
\end{tabular}
}
\end{subtable}

\vspace{0.6em}

\begin{subtable}[t]{\textwidth}
\centering
\caption*{Panel B. Baseline balance by NPCP enrollment (Treated vs. Control)}
\resizebox{\textwidth}{!}{%
\begin{tabular}{lrrrrrrr}
\toprule
\textbf{Variable} & \textbf{N (Control)} & \textbf{Mean (Control)} & \textbf{N (Treated)} & \textbf{Mean (Treated)} & \textbf{Diff. (T--C)} & \textbf{t-stat} & \textbf{SMD} \\
\midrule
Sex (Female = 1)         & 182{,}508 & 0.53 & 5{,}928 & 0.41 & 0.12 & 18.74 & 0.24\textsuperscript{**} \\
Age (years)              & 182{,}508 & 79.68 & 5{,}928 & 76.23 & 3.45 & 25.77 & 0.32\textsuperscript{**} \\
Comorbidity Index (MCDS) & 182{,}508 & 4.03 & 5{,}928 & 3.94 & 0.09 & 4.46  & 0.06 \\
Deprivation Index (1--5) & 172{,}008 & 2.90 & 5{,}733 & 2.88 & 0.02 & 1.44  & 0.01 \\
\bottomrule
\end{tabular}
}
\end{subtable}

\begin{flushleft}
\footnotesize \textit{Notes:} Differences are Treated minus Control. SMD = standardized mean difference. Variables with \textsuperscript{**} show notable imbalance and are controlled for in the analysis.
\end{flushleft}
\label{tab:Sum2b}
\end{table}

\clearpage
\subsection{Empirical Design}
\label{sec:Empirical Design}

Identification is anchored on each patient’s first acute CHF hospitalization. Let $\tau_{it}=0$ denote patient $i$’s hospitalization quarter in calendar time $t$, with $\tau_{it}<0$ and $\tau_{it}>0$ indicating quarters before and after hospitalization. Treatment is defined at the patient--quarter level as enrollment in NPCP:
$D_{it}=\mathbf{1}\{\text{patient } i \text{ is enrolled in quarter } t\}$.
Enrollment is absorbing, so once $D_{it}=1$, it remains equal to one in all subsequent quarters.

We study four quarterly outcomes $Y_{it}$ constructed consistently from the administrative records. Three capture preventive disease-management behavior: $\beta$-blocker adherence (PDC $\geq 75\%$), specialist cardiology visits, and echocardiography. The fourth is ER use, which captures acute-care utilization and complements the preventive outcomes by measuring short-run clinical deterioration and high-intensity care use. All four outcomes are observed symmetrically before and after the index hospitalization.

Because index hospitalizations occur in different calendar quarters and NPCP availability changes over time, treatment timing is staggered across cohorts. In this setting, standard two-way fixed-effects (TWFE) event-study regressions can be misleading. With heterogeneous treatment effects, TWFE can aggregate cohort-specific effects using non-convex and sometimes negative implicit weights (Goodman-Bacon 2021; de Chaisemartin and D’Haultfœuille 2020; Sun and Abraham 2021). As a result, the estimated dynamic profile can be distorted even when all cohort-specific effects have the same sign. This issue is especially relevant here because common calendar-time shocks, most notably the COVID-19 pandemic, plausibly altered the post-hospitalization environment differently across cohorts.

Our primary estimator is the interaction-weighted event-study estimator of Sun and Abraham (2021), implemented via \texttt{xtevent} (Freyaldenhoven et al., 2021, 2025). We use this approach because it is well suited to staggered adoption with heterogeneous dynamic treatment effects, avoids the problematic implicit weighting of conventional TWFE event studies, and fits naturally within a linear fixed-effects panel framework. Conceptually, it recovers cohort-specific relative-time effects (CATTs) and aggregates them with interaction weights, so the resulting event-time coefficients remain interpretable even when treatment effects vary across cohorts and over exposure time. As a robustness check, we also estimate the Callaway and Sant'Anna (2021) framework, which identifies group-time average treatment effects, ATT$(g,t)$, under limited anticipation and conditional parallel trends, and then aggregates them into event-time summaries using explicit weights. The two approaches therefore target closely related untreated-potential-outcome effects, but differ in how cohort-time effects are estimated and aggregated. They may also rely on different comparison groups: Callaway and Sant’Anna allow either never-treated or not-yet-treated controls, whereas the baseline Sun–Abraham implementation uses a fixed control cohort (never-treated units when available, or the last-treated cohort otherwise), although Sun and Abraham also note that alternative control sets can be used. Accordingly, similar results across the two methods are especially informative when the implementations differ in comparison group and/or covariate adjustment.

For each outcome $Y_{it}$, we estimate the following fixed-effects event-study regression following Freyaldenhoven et al. (2021):
\begin{equation}
\label{eq:es}
\begin{aligned}
Y_{it}
&= \sum_{\tau=-G-L_G}^{M+L_M-1} \delta_{\tau}\,\Delta D_{i,t-\tau}
\;+\; \delta_{M+L_M}\,D_{i,t-M-L_M}
\;+\; \delta_{-G-L_G-1}\big(1 - D_{i,t+G+L_G}\big) \\
&\quad+\; \alpha_i \;+\; \lambda_t \;+\; \mu_{r(i),t} \;+\; \phi_{g(i,t)}
\;+\; X_{it}'\theta \;+\; \varepsilon_{it}.
\end{aligned}
\end{equation}

The lead and lag coefficients $\delta_{\tau}$ capture dynamic differences by event time, and we normalize $\delta_{-2}=0$ because $\tau=-1$ is a transitional quarter. The indicators outside the main window absorb event times beyond the chosen lead--lag horizon, allowing observations far from the event to remain in the estimation sample without imposing extrapolation. In practice, we choose the event window to balance clinical relevance against data support at each horizon. The specification includes rich controls for confounding. Patient fixed effects $\alpha_i$ absorb time-invariant heterogeneity, including persistent differences in baseline severity and comorbidity. Calendar-quarter effects $\lambda_t$ capture aggregate shocks, while district-by-quarter effects $\mu_{r(i),t}$ flexibly absorb local time-varying supply conditions, policy changes, and differential pandemic intensity. We also include GP fixed effects $\phi_{g(i,t)}$ to account for provider-level practice style while allowing a patient’s primary physician to change over time. Time-varying covariates $X_{it}$ capture observable within-patient changes that may affect outcomes. This specification follows Freyaldenhoven et al. (2021) and, by construction, yields well-defined event-study contrasts under heterogeneous treatment effects.

Let $H_i$ denote patient $i$’s hospitalization quarter and $E_i$ denote the NPCP enrollment quarter. To ensure that pre-treatment event times are clearly untreated and post-treatment event times are clearly treated, we drop quarters with ambiguous treatment status. If $E_i>H_i$, we omit quarters $\tau\in\{0,\dots,E_i-H_i-1\}$, which occur after hospitalization but before enrollment. If $E_i<H_i$, we omit quarters $\tau\in\{E_i-H_i,\dots,-1\}$, which occur after enrollment but before hospitalization. Because these exclusions are deterministic functions of observed dates, they prevent contamination of pre-trends and avoid attributing shock-driven post-hospitalization utilization to NPCP before actual exposure (see Appendix~D (\ref{app:ignorable-censoring})).

Finally, we assess robustness to parallel-trends violations using the HonestDiD sensitivity analysis of Rambachan and Roth (2023). Rather than imposing exact parallel trends, HonestDiD allows the treated group’s untreated counterfactual to drift after treatment, subject to bounds tied to pre-treatment trend differences. We use the relative-magnitude restriction, in which the sensitivity parameter $M$ limits post-treatment deviations relative to the largest pre-treatment departure: $M=0$ imposes exact parallel trends, $M=1$ allows violations as large as the largest observed pre-trend deviation, and larger values allow proportionally larger departures. We report robust confidence intervals for the average post-hospitalization effect for $M\in\{0,0.5,1,1.5,2\}$. If the HonestDiD interval includes zero at a given $M$, the baseline positive estimate is not robust to a violation of that magnitude. This shows both how much our conclusions depend on parallel trends and how large a deviation would be needed to overturn the baseline inference.


\section{Results}
\label{sec:Results}

This section reports event-study evidence on outcomes around the first CHF hospitalization and on the incremental effect of NPCP enrollment. We first document the dynamics generated by the hospitalization shock itself, identified from staggered hospitalization timing. We then show that NPCP primarily strengthens longer-run preventive and follow-up care---medication adherence, cardiology visits, and echocardiograms---with little evidence of a corresponding change in ER utilization. Because program take-up is low, short-run estimates are less precise and should be interpreted with caution.

Our event-study figures are produced with \texttt{xtevent}/\texttt{xteventplot}, following the visualization recommendations in Freyaldenhoven et al. (2021, 2025). Each plot shows the estimated event-time path relative to a normalized reference period. Points indicate estimated effects, vertical bars indicate pointwise 95\% confidence intervals, and, when included, the outer band gives simultaneous (sup-t) inference. The x-axis is event time with binned endpoints, and the omitted reference period is normalized to zero. By default, the plots also display the mean of the outcome at the normalized period and $p$-values for Wald tests of no pre-trends and of whether post-event dynamics level off.

Before turning to NPCP, we first characterize how outcomes evolve around the first CHF hospitalization, which is the key clinical shock in our design. Because all patients in the analytic sample eventually experience a first hospitalization, identification comes from staggered timing rather than from comparison with a permanently never-shocked group. Following Sun and Abraham (2021), we use the latest hospitalization cohort as the reference cohort and its pre-hospitalization outcomes to construct the counterfactual path for earlier cohorts. Figures~\ref{fig:sa2T} and~\ref{fig:sa2UT} report these dynamics separately for patients who eventually enroll in NPCP and for those who do not. The estimated profiles show a strong post-shock increase in preventive and follow-up care. Medication adherence rises sharply at hospitalization, then attenuates over subsequent quarters and continues to evolve in the late post period, indicating no clear stabilization within the event window. This pattern is consistent with an immediate post-discharge strengthening of chronic management that gradually fades. Cardiology visits and echocardiograms also rise markedly in the hospitalization quarter and early follow-up, then decline as intensive monitoring tapers. By contrast, ER visits spike in the quarter immediately preceding hospitalization, then fall and stabilize afterward, consistent with acute deterioration in the run-up to admission and a smaller but persistent elevation after the shock (Table~\ref{tab:EST2}).

Building on these shock dynamics, we next estimate the incremental effect of NPCP enrollment using the Sun and Abraham (2021) interaction-weighted event-study estimator. The results, reported in Table~\ref{tab:EST2} and Figure~\ref{fig:sa_main}, show no evidence of differential pre-trends for any of the four outcomes. In the post-hospitalization period, the treated group exhibits a sustained strengthening of preventive care rather than a reduction in acute utilization. Medication adherence rises after hospitalization and remains higher in the long run, with a positive and statistically significant effect in the pooled $6+$ quarters bin. Cardiology visits and echocardiograms follow the same broad pattern: effects are imprecisely estimated in the first few post-enrollment quarters but become positive, large, and statistically significant from $6+$ quarters onward. This is consistent with NPCP improving care coordination and sustaining guideline-concordant follow-up after the immediate discharge phase. By contrast, ER visits remain close to zero throughout the post-hospitalization window and show no statistically meaningful long-run change.

Because only a small share of CHF patients enroll in NPCP, our estimates---especially at short horizons---are necessarily noisy. To clarify how much can be learned from statistically insignificant short-run coefficients, Table~\ref{tab:mde_avgpost} reports minimum detectable effects (MDEs) for the average post-treatment effect over event times 0--5 in the Sun--Abraham specification. These MDEs show that, over the first six post-hospitalization quarters, the design is mainly powered to detect moderate-to-large effects. Accordingly, insignificant short-run estimates rule out only large immediate impacts; they remain consistent with smaller effects below the design’s detectable threshold. This power limitation is even more relevant in subgroup analyses by age, gender, and baseline comorbidity, where smaller samples widen confidence intervals further. We therefore emphasize the size, timing, and persistence of the estimated effects, especially at longer horizons, rather than relying only on conventional significance thresholds.

We next examine heterogeneity by age, gender, and baseline comorbidity to better understand the program’s mechanisms. The identifying assumptions appear most credible for older patients (age $\geq 75$) and for women, for whom the pre-trend tests are supportive of a causal interpretation. In these groups, NPCP enrollment is associated with significantly higher preventive and follow-up care---including several percentage-point increases in $\beta$-blocker adherence, cardiology visits, and echocardiograms (all $p<0.05$)---with no meaningful change in ER use. By contrast, younger patients (age $<75$) and men display statistically significant pre-trend violations for multiple outcomes, so those estimates are best read as descriptive rather than causal. For baseline comorbidity, both high- and low-MCDS patients show sizable long-run gains in planned care, particularly cardiology and echocardiographic follow-up, with larger magnitudes among the higher-risk group. ER patterns differ across comorbidity strata---a small increase for high MCDS and a decline for low MCDS---but in both cases pre-trend tests are rejected, so those ER differences should be interpreted cautiously. Overall, the subgroup results that are most credible point to a common mechanism: NPCP mainly strengthens ongoing chronic-care management for higher-risk patients rather than directly reducing acute ER utilization.

We also assess robustness using the uniform confidence bands of Rambachan and Roth (2023). Figures~\ref{fig:honestdid}, \ref{fig:honestdid_lt75}, \ref{fig:honestdid_ge75}, \ref{fig:honestdid_female}, \ref{fig:honestdid_male}, \ref{fig:honestdid_MCDS_poor}, and \ref{fig:honestdid_MCDS_good} show that the HonestDiD 95\% robust confidence intervals include zero for all outcomes across sensitivity values $M \in [0,2]$, including the benchmark case $M=0$. At the same time, the underlying point estimates remain positive for medication adherence and specialist/diagnostic use, and remain near zero for ER visits, which is consistent with the qualitative pattern implied by NPCP. The wide HonestDiD intervals, however, imply that under conservative inference these improvements are not statistically distinguishable from zero. We therefore interpret the long-run positive patterns as suggestive rather than definitive evidence of benefit.

\begin{figure}[htbp]
  \centering
  \includegraphics[width=.82\linewidth,trim=0pt 5pt 0pt 0pt, clip]{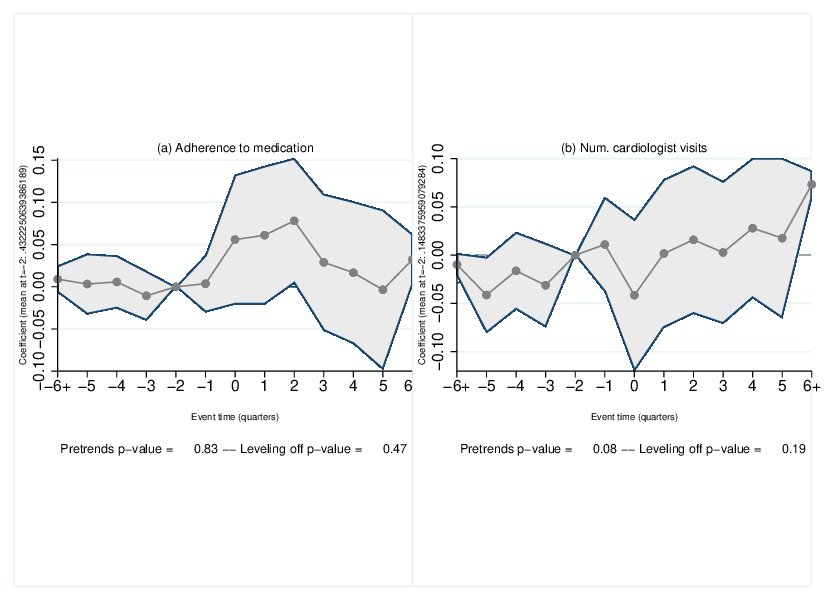}

  \vspace{0.4em}

  \includegraphics[width=.82\linewidth,trim=0pt 0pt 0pt 0pt, clip]{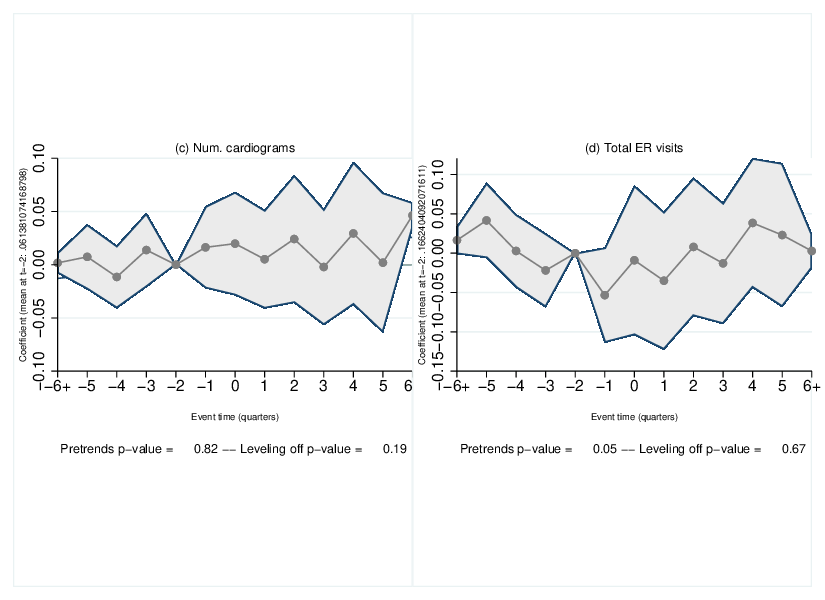}

\caption{NPCP effects from the Sun--Abraham (2021) event study. Event time is in quarters around the first CHF hospitalization (\(\tau=0\); reference = \(-2\)). Panels report \(\beta\)-blocker adherence, cardiology visits, echocardiograms, and ER visits. Points are estimates; bars are 95\% CIs; the long-run bin pools \(\tau \ge 6\).}
  \label{fig:sa_main}
\end{figure}

\begin{figure}[htbp]
    \centering
    \includegraphics[width=\textwidth]{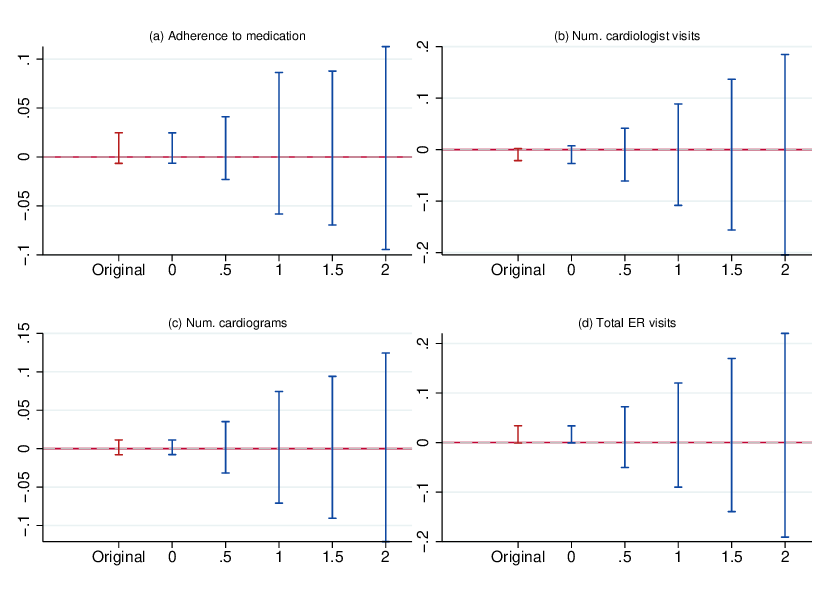}
\caption{HonestDiD sensitivity analysis (Rambachan and Roth, 2023) for the Sun--Abraham estimates. The figure reports 95\% confidence intervals across \(M \in \{0,0.5,1,1.5,2\}\), where larger \(M\) permits larger departures from parallel trends.}
    \label{fig:honestdid}
\end{figure}

As a robustness check, we re-estimate the dynamic effects using the Callaway and Sant’Anna (2021) difference-in-differences framework with inverse probability weighting (Figure~\ref{fig:cs_main} and Table~\ref{tab:pre-trend_CS}). The resulting event-time profiles closely resemble the baseline Sun--Abraham estimates, although short-run coefficients are somewhat smaller and less precise. We also estimate a conventional homogeneous-effects event study (Figure~\ref{fig:ES_homo}); its dynamics are very similar to the baseline profiles, suggesting that the main patterns are not driven by extreme treatment-effect heterogeneity across cohorts.

Figures~\ref{fig:sa4} and \ref{fig:sa7} vary the definition of the pooled long-run bin. In the baseline, late post-periods are pooled to improve precision; here we instead define the pooled long-run category as \(\tau \ge 5\) (Figure~\ref{fig:sa4}) and \(\tau \ge 7\) (Figure~\ref{fig:sa7}), holding all other choices fixed. The estimated profiles are qualitatively unchanged. For \(\beta\)-blocker adherence and planned specialist/diagnostic care, effects remain small and imprecisely estimated in the early post-hospitalization quarters but turn positive in the pooled long-run bin under both alternatives, reinforcing the conclusion that NPCP strengthens sustained follow-up rather than generating immediate changes. ER visits remain close to zero throughout the post window under all pooling choices, indicating no systematic long-run shift in acute utilization.

Finally, to address the concern that patients already enrolled in NPCP before the first CHF hospitalization may blur the interpretation of post-discharge management, we re-estimate the main Sun and Abraham interaction-weighted event study after excluding the 121 pre-hospitalization enrollees and restricting the treated group to patients who enroll only after the index hospitalization (\(N=373\)). The results are qualitatively unchanged. Pre-hospitalization coefficients remain close to zero, with no evidence of differential pre-trends for adherence and ER use (pre-trend $p$-values 0.76 and 0.83), weak evidence against parallel trends for cardiology visits ($p=0.10$), and only marginal evidence of a pre-trend violation for echocardiograms ($p=0.06$). In the post period, the estimates continue to show gradual long-run strengthening of planned care---higher adherence and greater specialist/diagnostic follow-up---while ER admissions remain near zero throughout. This restriction therefore confirms that the main findings are not driven by patients already managed by NPCP before the shock and are instead consistent with NPCP operating primarily as a post-discharge, longer-run care-coordination intervention (Figure~\ref{fig:sa121}).


\section{Conclusion}
\label{sec:Conclusion}

Using linked 2017--2023 administrative data, we study how care changes after a first CHF hospitalization and whether enrollment in a nurse-led chronic-care program (NPCP) alters that path. The index admission is a sharp clinical shock that triggers intensive discharge planning and a patient ``wake-up'' response. Consistent with this, preventive inputs---$\beta$-blocker adherence, cardiology follow-up, and echocardiographic monitoring---rise immediately even among patients who never enroll in NPCP, indicating a common post-hospitalization response rather than an incremental program effect. Relative to this common shock profile, our main finding is that NPCP appears to improve the persistence of preventive care. Under modern staggered-adoption event-study estimates, effects are small and imprecise in the first few post-discharge quarters but become positive and economically meaningful about six or more quarters after the shock. In that long-run window, enrollees maintain higher guideline-recommended management: adherence remains elevated, and cardiology follow-up and monitoring increase and persist. Thus, evaluations focused only on short horizons would understate the program's contribution to durable chronic-disease management.

By contrast, we find no clear evidence that NPCP reduces emergency room (ER) use. Estimated effects on ER visits remain near zero throughout the post-hospitalization period, consistent with offsetting mechanisms: improved outpatient management may prevent some acute exacerbations, while closer monitoring and patient education may also increase precautionary ER use.

A central interpretive issue is identification. Modern staggered-adoption DiD estimators yield a coherent pattern of rising preventive inputs, but our conservative HonestDiD sensitivity analysis shows that confidence intervals can include zero under modest deviations from parallel trends. We therefore interpret the long-run gains in preventive care as suggestive and consistent with the program's intended mechanism, while stopping short of claiming definitive causal effects under the strictest parallel-trends assumptions.

This pattern fits naturally within a health-capital framework. A CHF hospitalization raises the perceived marginal return to health investment and creates a short-lived ``discharge surge'' in provider attention and patient effort, helping explain the immediate improvement in adherence and follow-up for all patients. The relevant informational and coordination frictions are concrete barriers to sustained self-management and outpatient care: limited understanding of regimens and warning signs, uncertainty about whom to contact and when, difficulty scheduling and navigating visits and tests, fragmented communication across providers, and attention or forgetfulness constraints once symptoms stabilize. These frictions are lowest around the index hospitalization---when instructions are salient, appointments may be pre-booked, and monitoring is intense---but tend to re-emerge after discharge as supervision fades and patients return to routine life. Without support, initially higher investments may therefore decay over time (missed refills, delayed visits, fewer planned tests), even if preferences and clinical need remain unchanged. NPCP is designed to reduce these post-discharge frictions through education, proactive outreach, and care coordination, thereby lowering the effective cost and raising the productivity of sustained health investment. In this framework, the program's contribution is not the universal discharge surge, but the persistence of preventive behaviors and the accumulation of preventive management after the acute phase.

Our findings also align with evidence from other chronic-care settings. Nurse-coordinated interventions often strengthen chronic-disease management. Nurse-led diabetes programs significantly reduce hemoglobin A$_{1c}$ (Wang et al., 2019), intensive nurse management of hypertension improves long-run blood pressure control (Ito et al., 2024), and intensive nurse-led care after myocardial infarction improves medication adherence and cardiac self-care (Lizcano-\'Alvarez et al., 2023). In heart failure, specialized nurse-led clinics have increased adherence and substantially reduced readmissions (Wu et al., 2024). Institutional context also matters. Our study is set in Italy's universal-coverage NHS with strong primary-care gatekeeping, where GPs decide whether to enroll patients in chronic-disease management programs (De Belvis et al., 2024). We would therefore expect similar programs to perform similarly---or even more effectively---in comparably structured systems such as the UK's NHS. In more fragmented systems, or those with greater patient choice (e.g.\ the US), weaker GP coordination and different incentives may attenuate the realized impact of nurse-led coordination models.

We acknowledge several limitations. Low NPCP take-up reduces precision, especially at short horizons and in subgroup analyses by age, gender, and baseline comorbidity (MCDS). Our administrative data do not distinguish CHF-specific ER visits from other emergencies, and we do not observe program intensity or implementation fidelity across sites. More broadly, selection into enrollment cannot be fully ruled out, and our sensitivity checks show that even modest violations of parallel trends can overturn conventional statistical significance.

Despite these caveats, the overall pattern is informative. Nurse-led chronic care appears to strengthen long-run preventive health investment after a major CHF shock, consistent with a human-capital view of health, while its effect on acute care use remains uncertain. Future work should link NPCP exposure to harder clinical endpoints (readmissions, mortality) and costs, measure heterogeneity in implementation intensity, and exploit more plausibly exogenous variation (e.g.\ provider practice-style differences or staged rollout) to sharpen causal interpretation. Taken together, our evidence provides a credible basis for viewing nurse-led chronic care as a policy tool that can strengthen patients' long-run health-management inputs.

\appendix

\renewcommand{\thefigure}{A\arabic{figure}}
\renewcommand{\thetable}{A\arabic{table}}
\setcounter{figure}{0}
\setcounter{table}{0}

\section*{Appendix A: Structural Health-Capital Model and Event-Study Mapping}
\label{app:structural_model_short}

\setcounter{equation}{0}
\renewcommand{\theequation}{A\arabic{equation}}

This appendix provides a concise link between a Grossman-style health-capital model and
the event-study specification used in the paper. We show why an absorbing program like NPCP can generate dynamic effects on preventive investments (adherence, specialist visits, diagnostics) and potentially effects on ER use.

\paragraph{A1. Health transition with NPCP channels.}
Let $H_{i,t}$ denote the health stock at the start of period $t$. In the baseline,
\begin{equation}
H_{i,t+1} = (1-\delta_{i,t})\,H_{i,t} + \phi_t(g_i)\,I_{i,t},
\label{eq:A1_short}
\end{equation}
where $0<\delta_{i,t}<1$ is depreciation and $I_{i,t}$ is a composite index of health investment
(market inputs and patient effort). We model NPCP as an absorbing treatment $n_{i,t}\in\{0,1\}$
that can (i) reduce effective depreciation and (ii) raise input productivity:
\begin{equation}
H_{i,t+1} =
\bigl[1-\delta_{i,t}+\gamma_1 n_{i,t}\bigr]H_{i,t}
+\phi_t(g_i)\bigl[1+\gamma_2 n_{i,t}\bigr]I_{i,t},
\label{eq:A2_short}
\end{equation}
with $\gamma_1>0$ and $\gamma_2>0$.

\paragraph{A2. Linearized health deviations.}
Let $h_{i,t}\equiv H_{i,t}-\bar H_i$ be the deviation from the no-program steady state.
A first-order approximation of \eqref{eq:A2_short} around $n_{i,t}=0$ yields a stable linear law:
\begin{equation}
h_{i,t+1} = \lambda_H(g_i)\,h_{i,t} + \xi(g_i)\,n_{i,t} + u_{i,t},
\label{eq:A3_short}
\end{equation}
where $|\lambda_H(g_i)|<1$ summarizes health persistence, $\xi(g_i)$ is the composite shift in
the health transition induced by NPCP, and $u_{i,t}$ collects shocks/approximation error.
Intuitively, $\xi(g_i)$ bundles the ``depreciation'' channel ($\gamma_1$) and the ``productivity''
channel ($\gamma_2$), evaluated at baseline investment levels.

\paragraph{A3. Investment outcomes and ER use.}
The dynamic optimization problem (utility over consumption and health subject to the health law)
implies diminishing returns to investment. Rather than carry the full solution, we use a reduced-form
linearization consistent with the first-order condition:
\begin{equation}
I_{i,t}-\bar I_i = \alpha_H\,h_{i,t} + \alpha_N\,n_{i,t} + v_{i,t},
\label{eq:A4_short}
\end{equation}
where $\alpha_H$ captures how investment responds to health and $\alpha_N$ captures direct program
effects on investment costs/productivity (e.g., lower informational/behavioral costs).

Observable outcomes load on health and may also have a direct NPCP component. For an ``investment''
outcome $Y^{(m)}$ (adherence, visits, diagnostics), write:
\begin{equation}
Y^{(m)}_{i,t} = \alpha^{(m)}_i + \lambda^{(m)}_t + q_m\,h_{i,t} + r_m\,n_{i,t} + X'_{i,t}\theta^{(m)} + \varepsilon^{(m)}_{i,t}.
\label{eq:A5_short}
\end{equation}
For ER use, allow an ambiguous direct effect (e.g., monitoring/triage could increase precautionary ER
visits even if health improves):
\begin{equation}
Y^{(ER)}_{i,t} = \alpha^{(ER)}_i + \lambda^{(ER)}_t + \pi_1\,h_{i,t} - \pi_2\,n_{i,t} + X'_{i,t}\theta^{(ER)} + \varepsilon^{(ER)}_{i,t}.
\label{eq:A6_short}
\end{equation}

\paragraph{A4. Mapping to event-study coefficients under an absorbing treatment.}
Let $T_i$ be NPCP start time and define event time $k=t-T_i$. With absorbing treatment,
$n_{i,t}=1[k\ge 0]$. Under \eqref{eq:A3_short}, normalizing $h_{i,T_i-1}=0$ and ignoring shocks
for intuition, the implied health path is
\begin{equation}
h_{i,T_i+k} =
\xi(g_i)\sum_{s=0}^{k-1}\lambda_H(g_i)^s
=\xi(g_i)\,\frac{1-\lambda_H(g_i)^k}{1-\lambda_H(g_i)},
\qquad k\ge 0,
\label{eq:A7_short}
\end{equation}
and $h_{i,T_i+k}=0$ for $k<0$ under no anticipation.

Therefore, the event-study coefficient for an investment outcome satisfies, for $k\ge 0$,
\begin{equation}
\beta^{(m)}_k = r_m + q_m\,h_{i,T_i+k},
\label{eq:A8_short}
\end{equation}
and for ER use,
\begin{equation}
\beta^{(ER)}_k = (-\pi_2) + \pi_1\,h_{i,T_i+k}.
\label{eq:A9_short}
\end{equation}
This mapping implies dynamic effects that generally converge to a long-run impact (a new steady state) rather than mechanically widening over time. Lead coefficients ($k<0$) are predicted to be zero in the absence of anticipation/pre-trends.

\clearpage



\renewcommand{\thefigure}{B\arabic{figure}}
\renewcommand{\thetable}{B\arabic{table}}
\setcounter{figure}{0}
\setcounter{table}{0}

\section*{Appendix B: Data and Identification}
\label{app:Data_ID}

\begin{figure}[htbp]
    \centering
    \includegraphics[width=4.9in]{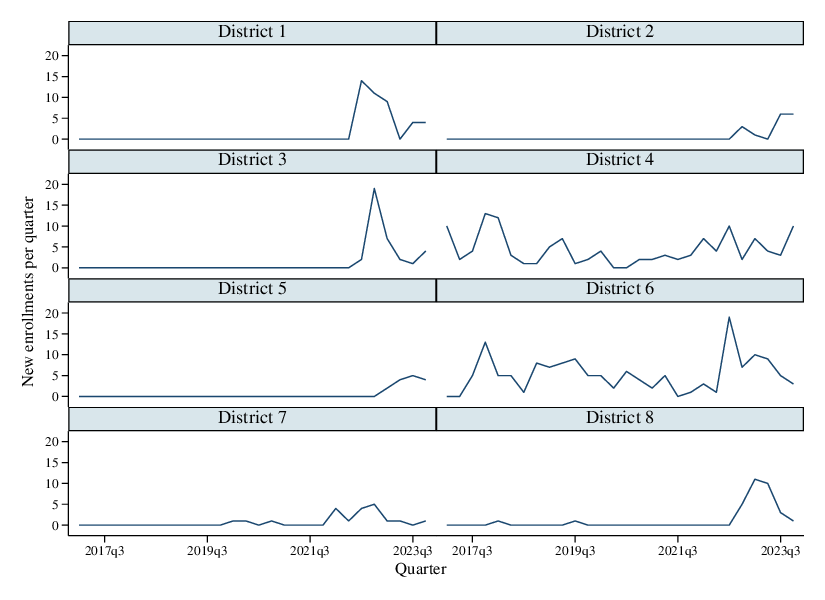}
    \caption{New enrollments into the nurse-led program by district and quarter, 2017–2023. Each panel plots intake\_dt, the number of patients who newly enroll in the current quarter (enroll\_q = quarter\_period). The variable is constructed from the CHF patient–quarter panel and aggregated to the district–quarter level.}
    \label{Fig:scatter_data}
\end{figure}

\begin{table}[htbp]\centering
\caption{GP enrollment propensity (leave-one-out) by NPCP enrollment status}
\label{tab:gp_leniency_loo}
\footnotesize
\begin{tabular}{l lrrrrrr}
\toprule
& & \multicolumn{2}{c}{Controls ($D=0$)} & \multicolumn{2}{c}{Treated ($D=1$)} & \multicolumn{2}{c}{Treated $-$ Control} \\
\cmidrule(lr){3-4}\cmidrule(lr){5-6}\cmidrule(lr){7-8}
Panel & Sample definition & $N$ & Mean (SD) & $N$ & Mean (SD) & Diff. & $t$-stat \\
\midrule
A & Full eligible sample
  & 15{,}097 & 0.0275 (0.0697)
  & 487 & 0.1490 (0.1399)
  & 0.1215 & 36.21 \\
B & Restrict to GPs with $n_{\text{gp}}\ge 10$
  & 12{,}896 & 0.0279 (0.0652)
  & 420 & 0.1421 (0.1110)
  & 0.1141 & 34.31 \\
\bottomrule
\end{tabular}

\vspace{0.2cm}
\begin{minipage}{0.98\linewidth}
\footnotesize
\textit{Notes:} $D$ indicates NPCP enrollment (treated $=1$). The table reports $L^{\text{LOO}}_{\text{gp}}$, a leave-one-out measure of GP $g$'s NPCP enrollment propensity for patient $i$:
$L^{\text{LOO}}_{\text{gp}} = \left(\sum_{j\in g} D_j - D_i\right)/(n_g - 1)$, i.e., the share of \emph{other} eligible patients assigned to the same GP who enroll (excluding patient $i$). The leave-one-out construction avoids mechanically embedding the patient's own enrollment status in the GP-level rate. Panel A uses the full eligible sample with observed baseline GP assignment. Panel B restricts to GPs with $n_{\text{gp}}\ge 10$ eligible patients to reduce noise from small GP panels. \emph{Diff.} is the treated-minus-control mean difference in $L^{\text{LOO}}_{\text{gp}}$ (in levels). $t$-statistics are from two-sample equality-of-means tests assuming equal variances; corresponding $p$-values are $<0.001$ in both panels. The large differences indicate strong sorting of enrolled patients toward higher-enrollment-propensity GPs, consistent with heterogeneity in GP practice style.
\end{minipage}
\end{table}

\clearpage



\renewcommand{\thefigure}{C\arabic{figure}}
\renewcommand{\thetable}{C\arabic{table}}
\setcounter{figure}{0}
\setcounter{table}{0}

\section*{Appendix C: Moving Average Graphs}
\label{App:DES_Graphs}

\begin{figure}[htbp]
	\centering
	\includegraphics[width=4.9in]{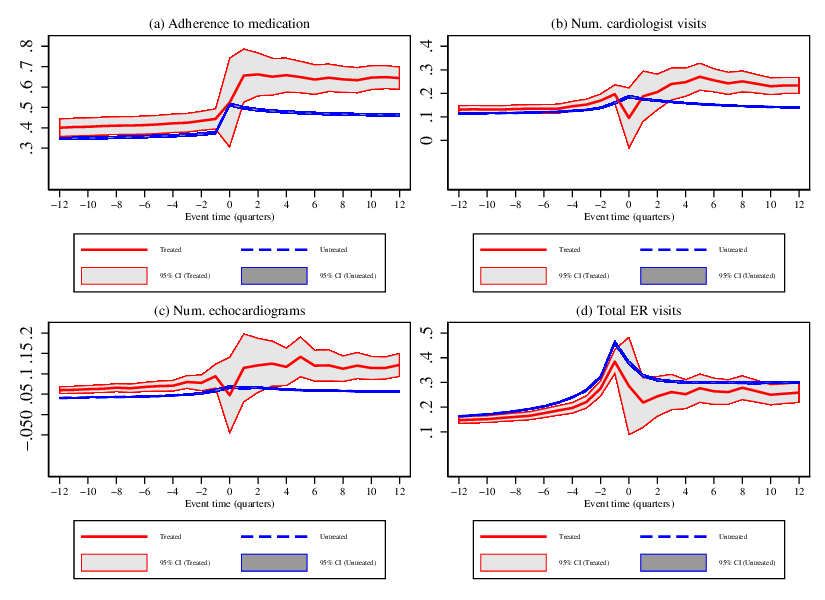}
	\caption{Moving-average event-time trajectories around the first CHF hospitalization ($t=0$) for panels (a)--(d): $\beta$-blocker adherence, cardiologist visits, echocardiograms, and ER visits. Lines plot forward and backward moving averages anchored at $t=0$ and $t=-1$; shaded bands report 95\% confidence intervals for treated and control groups over $\pm 12$ quarters. The sample includes treated and untreated patients and excludes the 121 patients who enrolled in NPCP before the index hospitalization.}
	\label{Fig:movAveexcl121}
\end{figure}


\begin{figure}[htbp]
	\centering
	\includegraphics[width=4.9in]{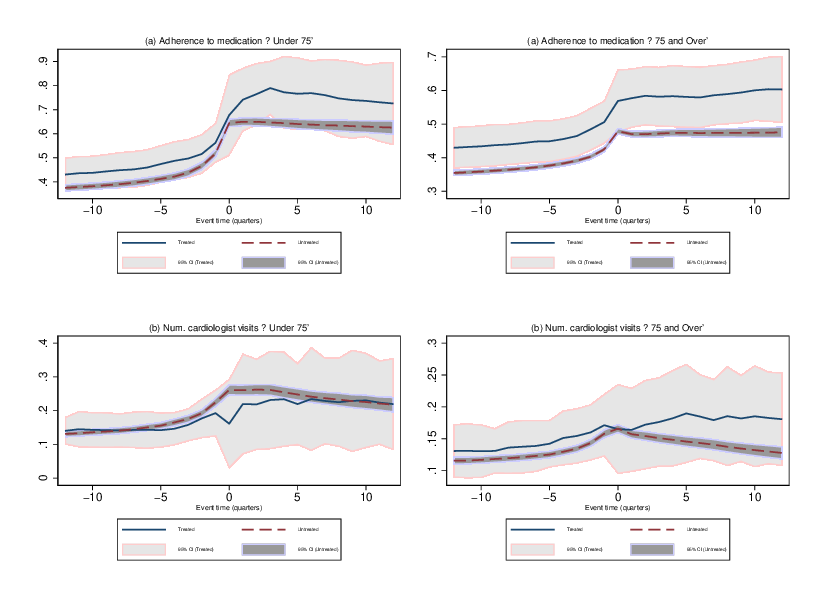}
        \caption{This figure shows moving-average outcome trajectories for patients under age 75 versus patients 75 and older. Panels (a) and (b) plot the paths for (a) $\beta$-blocker medication adherence and (b) cardiologist visits, centered at the quarter of the first CHF hospitalization (event time 0). Moving averages are computed forward from event time 0 and backward from event time -1. Shaded areas represent 95\% confidence intervals.}
    \label{Fig:scatter10}
\end{figure}


\begin{figure}[htbp]
	\centering
	\includegraphics[width=4.9in]{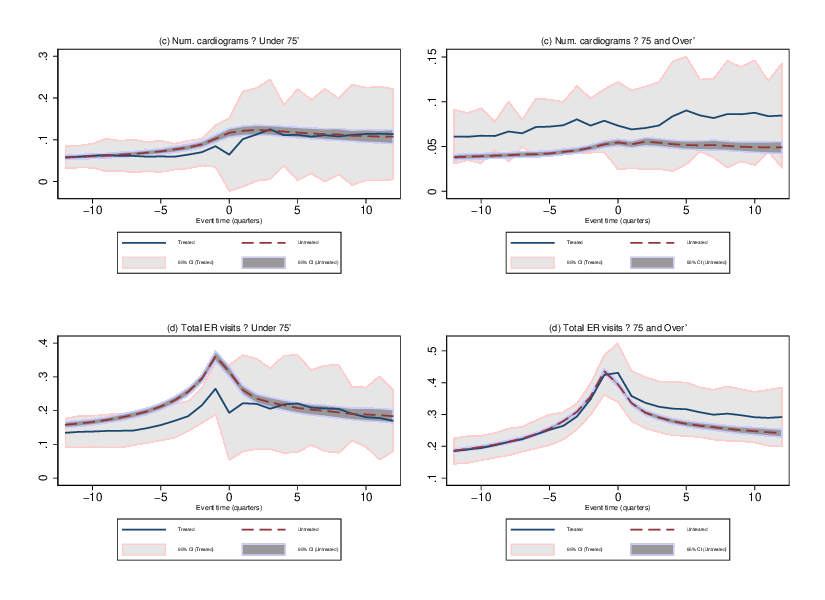}
        \caption{This figure shows moving-average trajectories for additional outcomes by age group (patients under 75 versus those 75 and over). Panels (c) and (d) plot the paths for (c) echocardiogram use and (d) total ER visits, with each series benchmarked at event time 0 (the quarter of the first hospitalization). Shaded bands denote 95\% confidence intervals around each series.}

	\label{Fig:scatter12}
\end{figure}
\clearpage

\begin{figure}[htbp]
	\centering
	\includegraphics[width=4.9in]{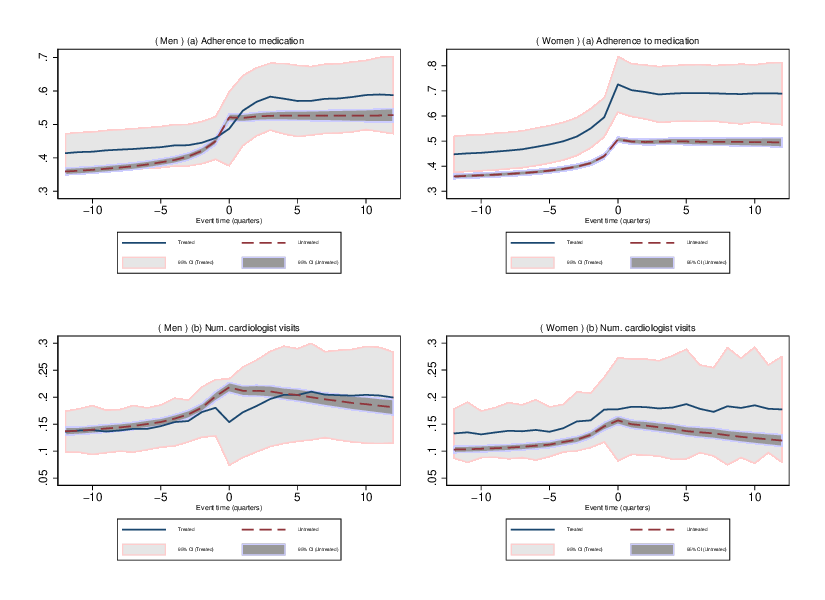}
        \caption{This figure shows moving-average outcome trajectories for gender of patients. Panels (a) and (b) plot the paths for (a) $\beta$-blocker medication adherence and (b) cardiologist visits, centered at the quarter of the first CHF hospitalization (event time 0). Moving averages are computed forward and backward from event time 0, and 95\% confidence intervals are indicated by the shaded areas.}
	\label{Fig:scatter20}
\end{figure}

\begin{figure}[htbp]
	\centering
	\includegraphics[width=4.9in]{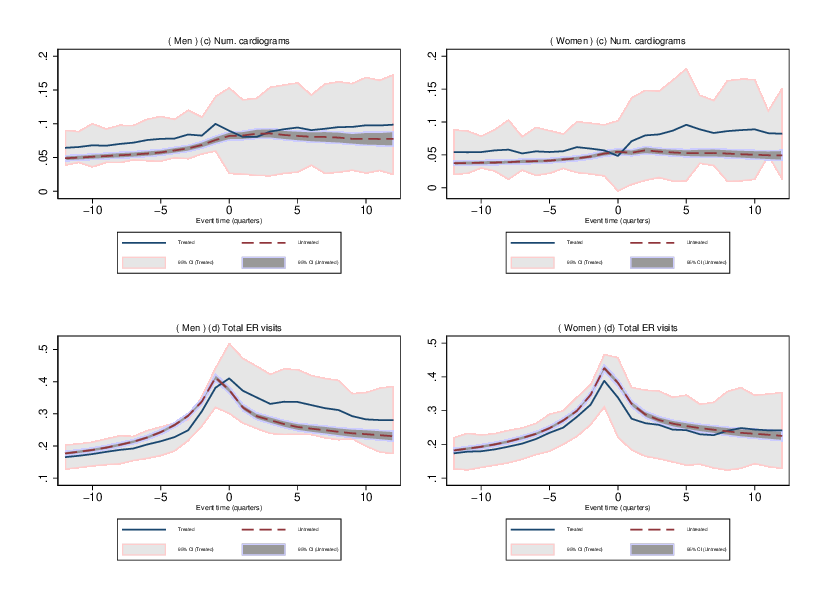}
        \caption{This figure shows moving-average outcome trajectories for gender of patients. Panels (c) and (d) plot the paths for (c) echocardiogram use and (d) total ER visits, with each series centered at event time 0 (the quarter of the first CHF hospitalization). Shaded bands denote 95\% confidence intervals for each trajectory.}
	\label{Fig:scatter21}
\end{figure}

\begin{figure}[htbp]
	\centering
	\includegraphics[width=4.9in]{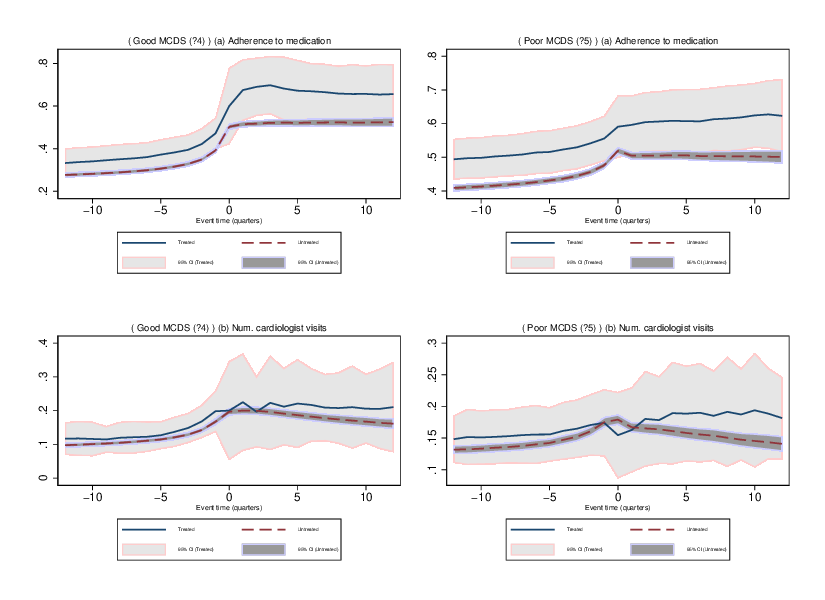}
        \caption{This figure shows moving-average outcome trajectories for patients with complex comorbidity (poor MCDS health status, with MCDS $\ge 5$ vs good MCDS health status, eith MCDS $\le 4$). Panels (a) and (b) plot the paths for (a) $\beta$-blocker medication adherence and (b) cardiologist visits, aligned at the quarter of the first CHF hospitalization (event time 0). Moving averages are computed forward and backward from event time 0, and shaded areas indicate 95\% confidence intervals.}
	\label{Fig:scatter22}
\end{figure}

\begin{figure}[htbp]
	\centering
	\includegraphics[width=4.9in]{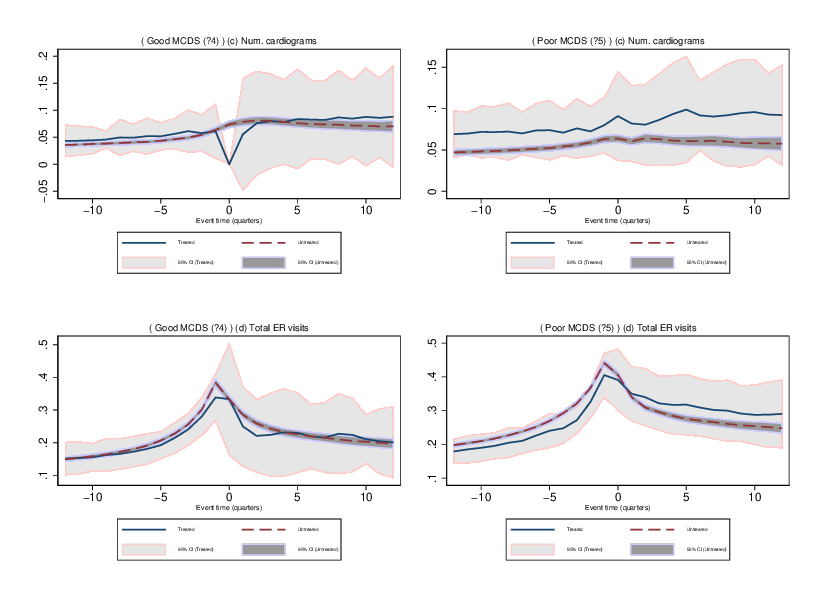}
        \caption{This figure shows moving-average trajectories for additional outcomes among patients with complex comorbidity (Poor MCDS health status, with MCDS $\ge 5$ vs good MCDS health status, eith MCDS $\le 4$). Panels (c) and (d) plot the paths for (c) echocardiogram use and (d) total ER visits, with each series centered at event time 0. Shaded bands denote 95\% confidence intervals for the outcomes.}
	\label{Fig:scatter23}
\end{figure}

\clearpage



\renewcommand{\thefigure}{D\arabic{figure}}
\renewcommand{\thetable}{D\arabic{table}}
\setcounter{figure}{0}
\setcounter{table}{0}

\section*{Appendix D: Additional Empirical Design Details}
\label{app:ignorable-censoring}

Our event-study analyses are indexed by \emph{event time relative to the first CHF hospitalization}, which we treat as a plausibly exogenous clinical shock. Because NPCP enrollment may occur either before or after hospitalization, some periods fall between the two dates. In those intervals, treatment status relative to hospitalization is ambiguous. To avoid this ambiguity, we censor outcomes in those periods.

\paragraph{Censoring rule.}
Let $T_i^H$ be the time of patient $i$'s first CHF hospitalization (the event-study anchor) and $T_i^E$ the time of NPCP enrollment. Define
\[
R_{it} =
\begin{cases}
1, & \text{if } t < \min\{T_i^H,T_i^E\} \;\; \text{or} \;\; t \ge \max\{T_i^H,T_i^E\},\\
0, & \text{otherwise}.
\end{cases}
\]
Equivalently:  
- If enrollment follows hospitalization, drop $t \in [T_i^H,\,T_i^E-1]$.  
- If enrollment precedes hospitalization, drop $t \in [T_i^E,\,T_i^H-1]$.  

Thus $R_{it}$ is a deterministic function of $(t,T_i^H,T_i^E)$. We restrict analysis to $\{(i,t):R_{it}=1\}$, i.e.\ periods where treatment status relative to hospitalization is uniquely defined.

\paragraph{Key idea.}
This censoring is \emph{missing by design}: it depends only on observed enrollment and hospitalization dates, not on potential outcomes. Hence it does not bias identification (Little and Rubin, 2019).

\begin{lemma}
If $R_{it}$ is a deterministic function of $(T_i^H,T_i^E)$ and conditional parallel trends hold on the kept support, then the Sun and Abraham (2021) interaction-weighted event-study estimand remains identified on $\{R_{it}=1\}$.
\end{lemma}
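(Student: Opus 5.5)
The plan is to write the Sun and Abraham (2021) estimand as a weighted average of cohort-specific relative-time effects, $\mathrm{CATT}_{e,\ell}=E[Y_{i,e+\ell}(e)-Y_{i,e+\ell}(\infty)\mid E_i=e]$, and to show that each ingredient of the identification argument survives when we move from the full panel to the kept support $\{R_{it}=1\}$. The key point is that $R_{it}$ is a deterministic function of $(t,T_i^H,T_i^E)$, and cohort membership is itself defined by $(T_i^H,T_i^E)$. Conditioning on $R_{it}=1$ is therefore, within a cohort-by-period cell, the same as conditioning on the cohort. In particular, it does not restrict the potential outcomes $\{Y_{it}(e)\}$ beyond what cohort conditioning already does. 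This is the precise sense of "missing by design" (Little and Rubin, 2019). Since $R$ is measurable with respect to the timing variables, $Y_{it}(\cdot)\perp R_{it}\mid (T_i^H,T_i^E,t)$ holds trivially.

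The steps would run in this order.

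\emph{Step 1.} For each cohort $e$ and relative period $\ell$, note that $R_{i,e+\ell}$ is constant across units in the cell $\{E_i=e,\ T_i^H=h\}$. Each such cell is therefore either entirely kept or entirely dropped. Within kept cells, sample means of observed outcomes equal population conditional means: $E[Y_{it}\mid E_i=e,R_{it}=1]=E[Y_{it}(e)\mid E_i=e,\ T^H \text{ in the kept set}]$.

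\emph{Step 2.} Using the kept reference period $\tau=-2$ and the control cohort (never-enrolled or last-treated, for whom $R_{it}\equiv1$ because $T_i^E=\infty$), form the DiD contrast $E[Y_{i,e+\ell}-Y_{i,e-2}\mid E_i=e]-E[Y_{i,e+\ell}-Y_{i,e-2}\mid C]$. Apply conditional parallel trends on the kept support, together with no anticipation for kept pre-periods, to cancel the untreated trends. Step 1 is what lets us substitute observed outcomes for potential outcomes. The contrast then equals $\mathrm{CATT}_{e,\ell}$ restricted to the kept hospitalization-timing subpopulation.

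\emph{Step 3.} Aggregate with the interaction weights, now defined as the cohort shares among units with $R_{i,e+\ell}=1$. These weights are nonnegative, sum to one, and are identified from the observed timing distribution. The interaction-weighted estimand is therefore a convex combination of identified CATTs, and hence identified. Finally, observe that the linear regression (\ref{eq:es}) estimated on the kept sample with saturated cohort-by-relative-time interactions recovers exactly these cell contrasts.

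The main obstacle is Step 2. Because censoring depends on $T_i^H-T_i^E$, the kept subpopulation at a given $\ell$ varies with $\ell$, so the estimand is a CATT for a composition-changing group rather than the full-cohort CATT. Conditional parallel trends must be imposed for that subpopulation, which is exactly why the lemma assumes parallel trends "on the kept support." I would handle this by conditioning throughout on the cell $(E_i,T_i^H)$, or on the gap $E_i-H_i$, treating these as finer cohorts. Identification then holds cell by cell, and the lemma's conclusion becomes identification of the kept-support-weighted estimand. I would state explicitly that this need not coincide with the uncensored population estimand unless effects are homogeneous across gap lengths.
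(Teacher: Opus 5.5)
Your argument takes the same route as the paper's proof. $R_{it}$ is a function of $(t,T_i^H,T_i^E)$, so it is trivially ignorable given timing; the cohort-level DiD contrasts therefore survive the deletion and can be aggregated with interaction weights. You are more careful than the paper in two ways. First, you use the correct Sun--Abraham comparison group, a fixed never-treated or last-treated control cohort; the paper's step (3) invokes not-yet-treated units, which is the Callaway--Sant'Anna design. Second, your closing caveat is right. The censored window depends on $E_i-T_i^H$, so the kept subpopulation changes with event time. What is identified is therefore a kept-support estimand over $(E_i,T_i^H)$ cells, not the uncensored ATT that the paper's step (4) suggests is still recovered.

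There is, however, a step that fails as written. Step 2 differences every treated cell against one common reference quarter and calls it kept, but the censoring rule deletes that quarter for an entire class of enrollees. In your formula the base is calendar quarter $e-2$ with $e=E_i$. For anyone who enrolls two or more quarters after admission, $e-2\in[T_i^H,E_i-1]$, which is exactly the dropped window; the paper describes such multi-quarter lags as the common case. If you instead mean the paper's $\tau=-2$, i.e.\ quarter $T_i^H-2$, it is kept for post-admission enrollees. It is dropped for anyone who enrolls two or more quarters before admission, since then $T_i^H-2\in[E_i,T_i^H-1]$. For these cells the Step 2 contrast cannot be computed from the kept data. The exact-recovery claim in Step 3 fails for the same cells: with no observed reference quarter, a cell's relative-time dummies are collinear with its unit effects, so they are pinned down only up to a constant.

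The repair fits your own cell-by-cell refinement. Use a cell-specific base $b_i=\min\{T_i^H-2,\,E_i-1\}$. It always lies strictly below $\min\{E_i,T_i^H\}$, so it is kept, and it avoids the transitional quarter $\tau=-1$. Then impose parallel trends and no anticipation relative to $b_i$, and normalize cell by cell. A smaller point of the same kind: $R_{it}\equiv 1$ for never-enrolled controls follows from the case-wise statement of the rule, not from substituting $T_i^E=\infty$ into the displayed formula. That substitution would censor all their post-admission quarters. Also, $R_{it}\equiv 1$ does not hold for a last-treated control cohort.
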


\begin{proof}
(1) By construction, $R_{it}$ depends only on $(T_i^H,T_i^E,t)$, which are fully observed.  
(2) Therefore, conditional on these dates, $R_{it}$ is independent of potential outcomes.  
(3) The Sun and Abraham estimator compares treated units to not-yet-treated units at the same relative time $\ell$. Dropping ambiguous periods removes intervals with undefined treatment status but does not alter these identifying contrasts.  
(4) Hence, on the restricted support $\{R_{it}=1\}$, the interaction-weighted estimator continues to recover the average treatment effect on the treated (ATT) relative to hospitalization. 
\end{proof}

\paragraph{Takeaway.}
We drop only those periods where treatment status is undefined. Since this censoring depends solely on observed dates and not on outcomes, it is ignorable for identification. The event-study estimates therefore remain valid.



\renewcommand{\thefigure}{E\arabic{figure}}
\renewcommand{\thetable}{E\arabic{table}}
\setcounter{figure}{0}
\setcounter{table}{0}

\section*{Appendix E: Estimation Graphs}
\label{App:EST_Graphs}


\begin{figure}[htbp]
  \centering
  \includegraphics[width=.82\linewidth,trim=0pt 5pt 0pt 0pt, clip]{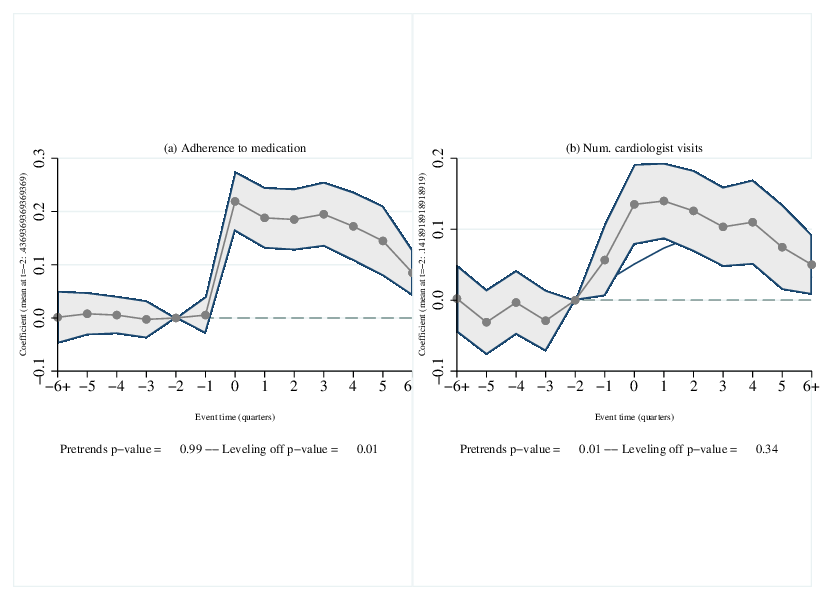}

  \vspace{0.4em}

  \includegraphics[width=.82\linewidth,trim=0pt 0pt 0pt 0pt, clip]{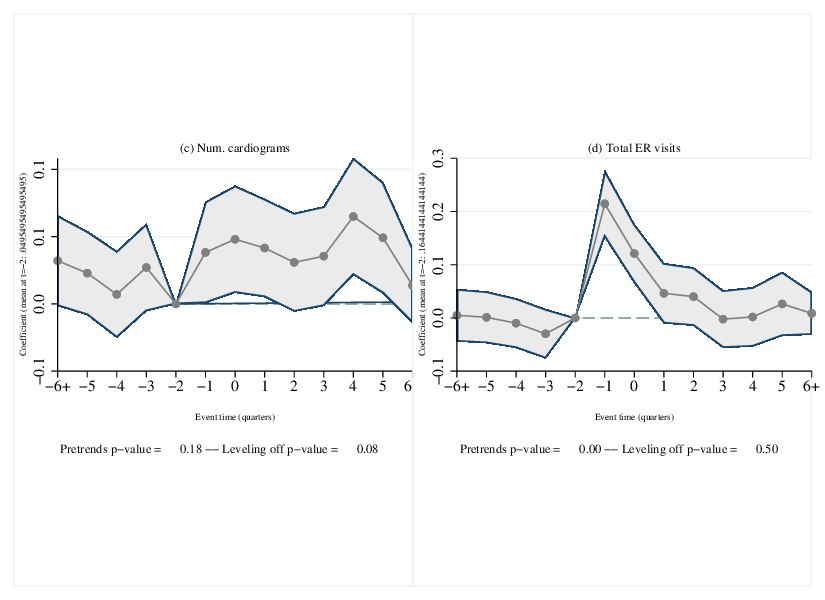}

\caption{Adverse-shock effects of the first CHF hospitalization for treated patients, estimated using the Sun--Abraham (2021) event-study design. Event time is measured in quarters relative to hospitalization (\(\tau=0\); reference period \(\tau=-2\)). Panels (a)--(d) show \(\beta\)-blocker adherence, cardiologist visits, echocardiograms, and ER visits for patients who enroll in NPCP. Points denote estimates, bands show 95\% confidence intervals, and the long-run bin pools \(\tau \ge 6\).}
  \label{fig:sa2T}
\end{figure}

\begin{figure}[htbp]
  \centering
  \includegraphics[width=.82\linewidth,trim=0pt 5pt 0pt 0pt, clip]{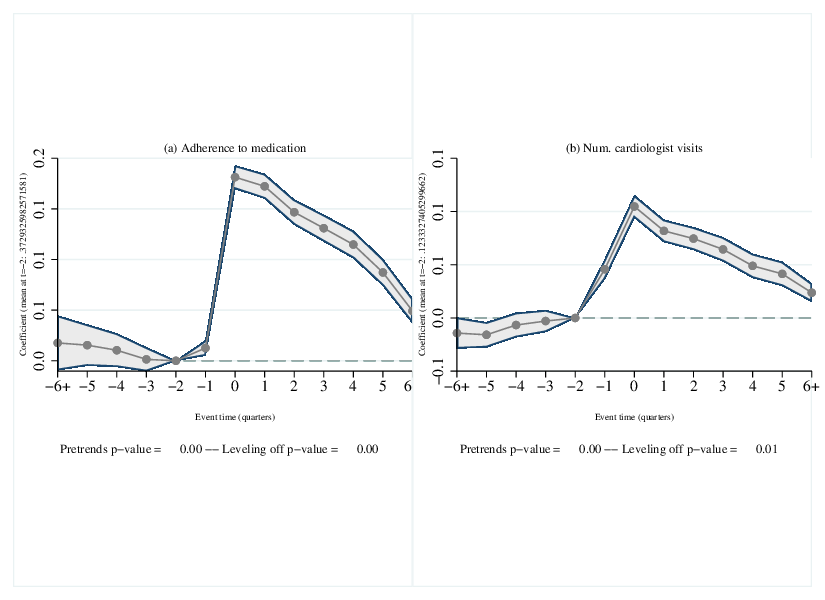}

  \vspace{0.4em}

  \includegraphics[width=.82\linewidth,trim=0pt 0pt 0pt 0pt, clip]{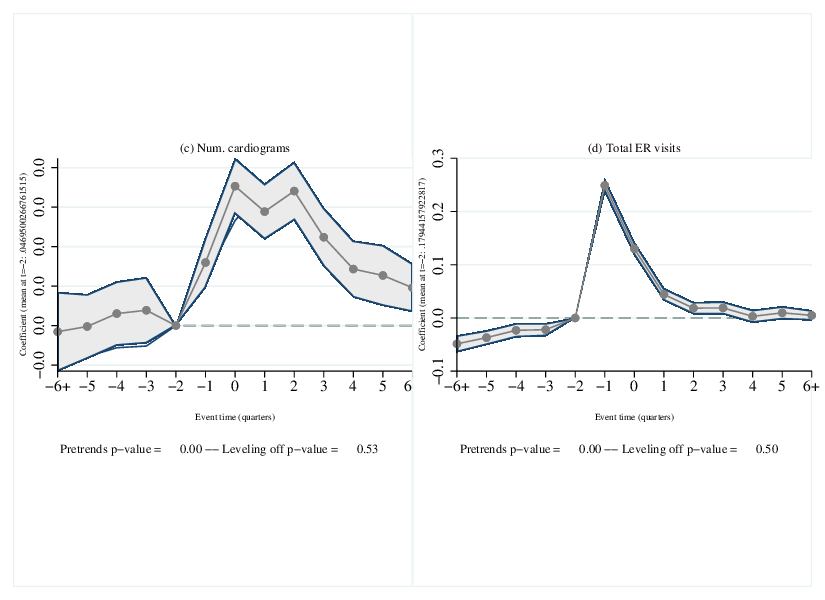}

\caption{Adverse-shock effects of the first CHF hospitalization for untreated patients, estimated using the Sun--Abraham (2021) event-study design. Event time is measured in quarters relative to hospitalization (\(\tau=0\); reference period \(\tau=-2\)). Panels (a)--(d) show \(\beta\)-blocker adherence, cardiologist visits, echocardiograms, and ER visits for patients who never enroll in NPCP. Points denote estimates, bands show 95\% confidence intervals, and the long-run bin pools \(\tau \ge 6\).}
  \label{fig:sa2UT}
\end{figure}









\begin{figure}[htbp]
  \centering
  \includegraphics[width=.82\linewidth,trim=0pt 5pt 0pt 0pt, clip]{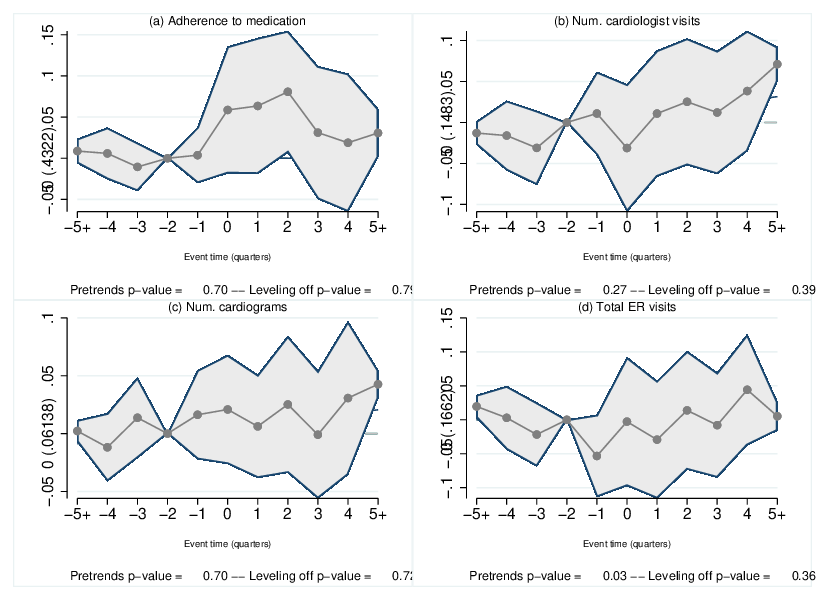}

\caption{NPCP effects via Sun--Abraham (2021) event study. Event time in quarters around first CHF hospitalization (\(\tau=0\); ref = \(-2\)). Panels (a--d): \(\beta\)-blocker adherence, cardiologist visits, echocardiograms, ER visits. Points = estimates; bands = 95\% CIs; long-run bin pools \(\tau \ge 5\).}
  \label{fig:sa4}
\end{figure}



\begin{figure}[htbp]
  \centering
  \includegraphics[width=.82\linewidth,trim=0pt 5pt 0pt 0pt, clip]{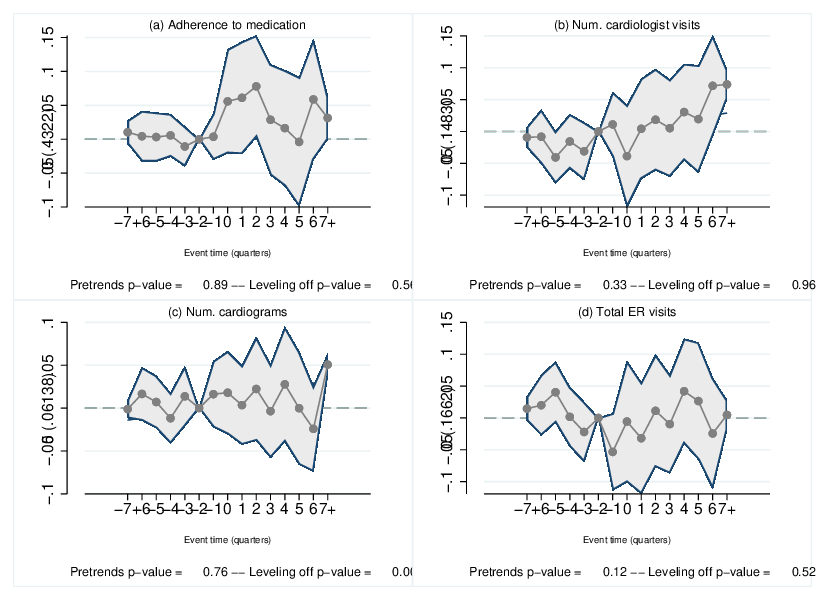}

\caption{NPCP effects via Sun--Abraham (2021) event study. Event time in quarters around first CHF hospitalization (\(\tau=0\); ref = \(-2\)). Panels (a--d): \(\beta\)-blocker adherence, cardiologist visits, echocardiograms, ER visits. Points = estimates; bands = 95\% CIs; long-run bin pools \(\tau \ge 7\).}
  \label{fig:sa7}
\end{figure}


\begin{figure}[htbp]
  \centering
  \includegraphics[width=.82\linewidth,trim=0pt 5pt 0pt 0pt, clip]{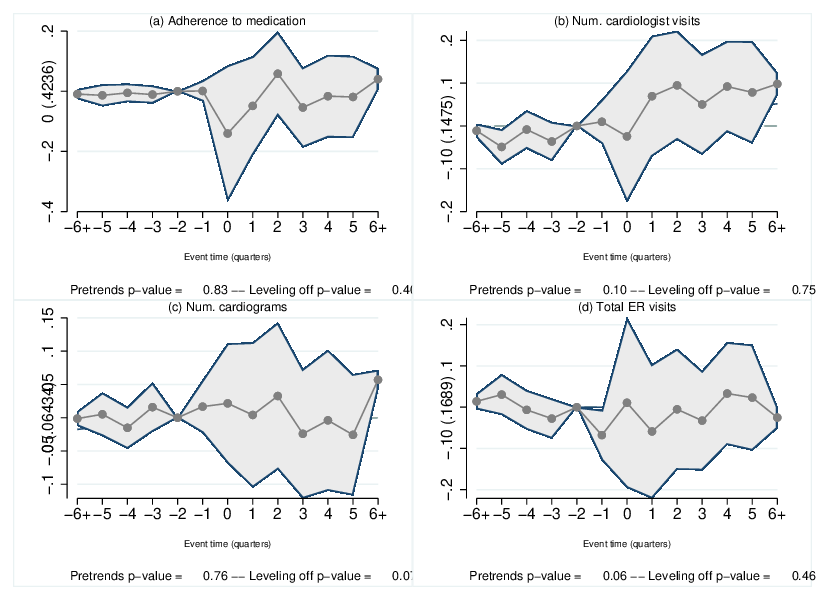}

\caption{NPCP effects via Sun--Abraham (2021) event study excludes 121 patients who enrolled before the first hospitalization experienced. Event time in quarters around first CHF hospitalization (\(\tau=0\); ref = \(-2\)). Panels (a--d): \(\beta\)-blocker adherence, cardiologist visits, echocardiograms, ER visits. Points = estimates; bands = 95\% CIs; long-run bin pools \(\tau \ge 6\).}
  \label{fig:sa121}
\end{figure}

\begin{figure}[htbp]
    \centering
    \includegraphics[width=0.85\textwidth]{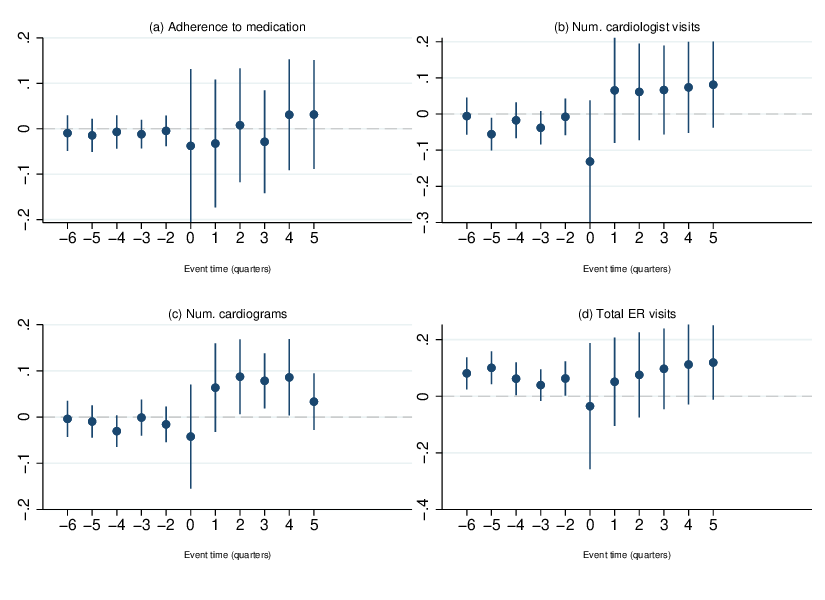}
\caption{This figure shows dynamic treatment effect estimates using the doubly robust estimator of Callaway and Sant’Anna (2021) with inverse probability weighting. Panels (a)–(d) correspond to effects on $\beta$-blocker adherence, cardiologist visits, echocardiograms, and ER visits, respectively.}
    \label{fig:cs_main}
\end{figure}
\clearpage



\begin{figure}[htbp]
  \centering
  \includegraphics[width=0.75\linewidth,trim=0pt 0pt 0pt 0pt,clip]{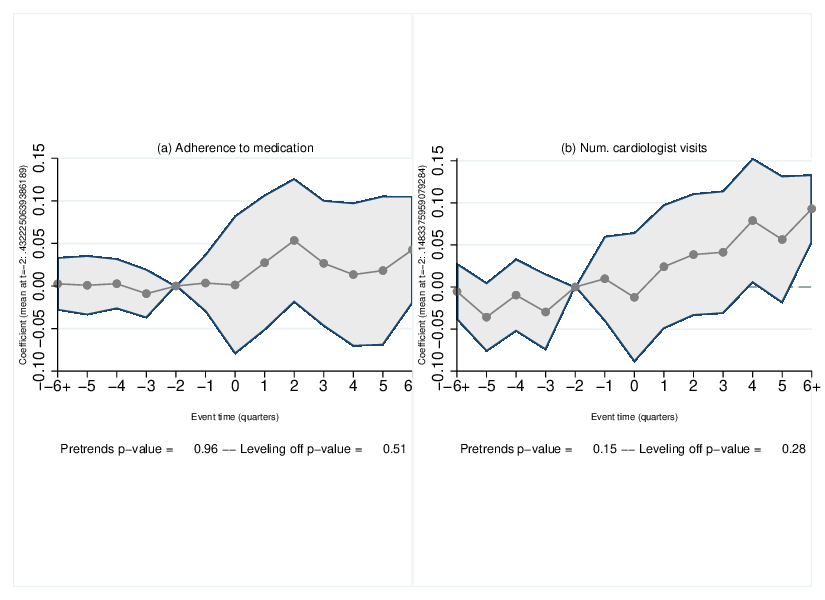}

  \vspace{0.4em}

  \includegraphics[width=0.82\linewidth,trim=0pt 0pt 0pt 0pt,clip]{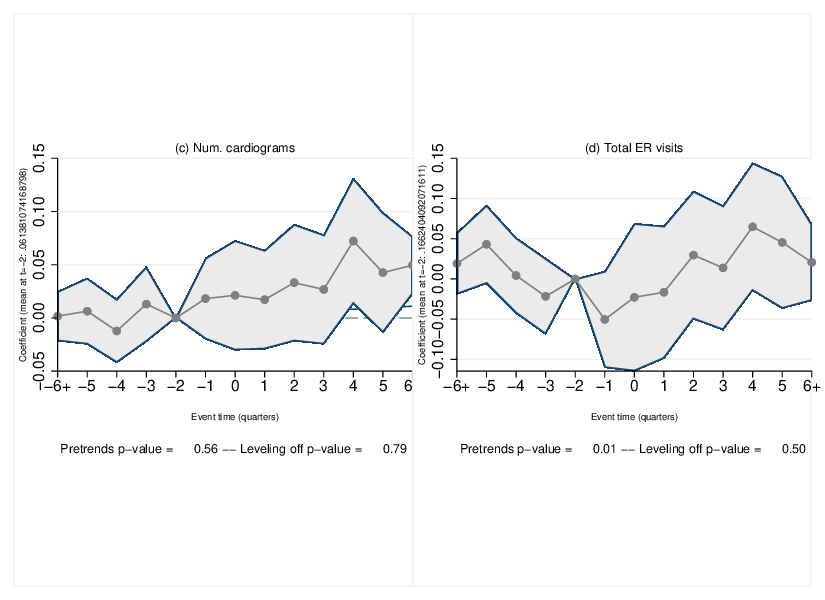}

  \caption{NPCP enrollment effects under \textbf{homogeneous treatment effects} via Sun--Abraham (2021) event study. Event time in quarters since first CHF hospitalization ($\tau=0$; ref $=-2$). Panels (a–d): $\beta$-blocker adherence, cardiologist visits, echocardiograms, ER visits. Circles $=$ estimates; shading $=$ 95\% pointwise CIs; dashed $=$ 95\% uniform sup-$t$ bands.}
  \label{fig:ES_homo}
\end{figure}


\begin{figure}[htbp]
  \centering
  \includegraphics[width=.82\linewidth,trim=6pt 6pt 6pt 6pt,clip]{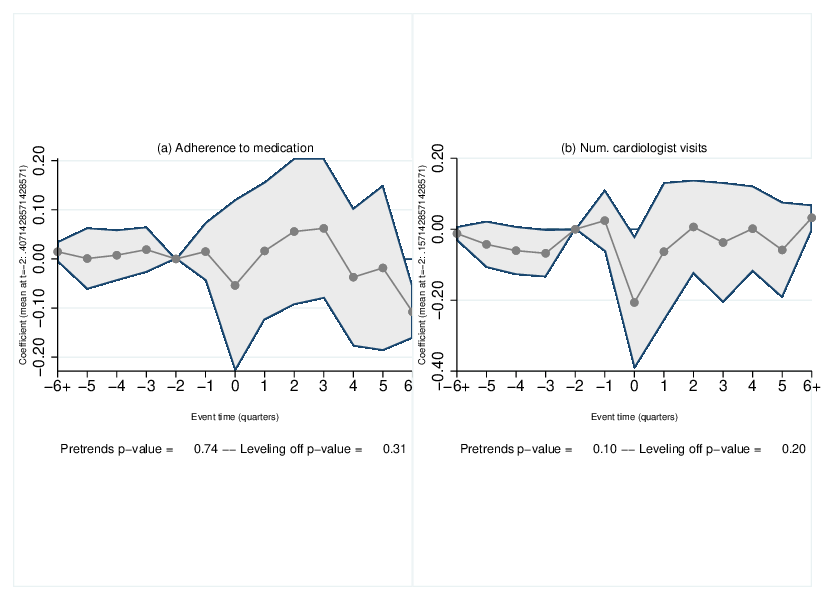}
  \includegraphics[width=.82\linewidth,trim=6pt 6pt 6pt 6pt,clip]{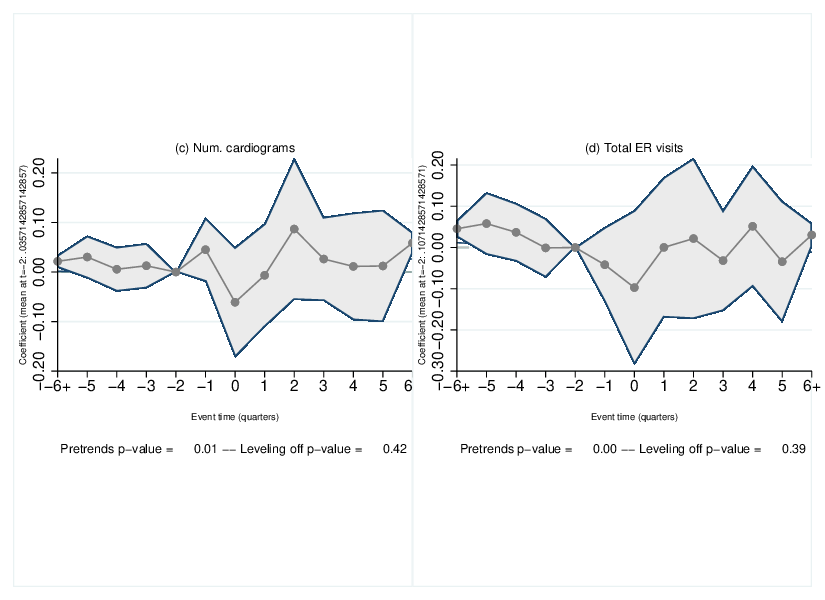}

  \caption{NPCP effects for \textbf{ patients age} $<\!75$ via Sun--Abraham (2021) event study. Event time in quarters since first CHF hospitalization ($\tau=0$; ref $=-2$). Panels (a--d): $\beta$-blocker adherence (PDC $\ge 75\%$), any cardiologist visit, any echocardiogram, any ER visit. Points $=$ estimates; bars $=$ 95\% CIs (SEs clustered by patient). Reported “Pre-trends” and “Leveling-off” are Wald-test $p$-values; y-axis parentheses show the reference-period mean ($\tau=-2$).}
  \label{fig:ES_SA_lt75}
\end{figure}

\begin{figure}[htbp]
    \centering
    \includegraphics[width=\textwidth]{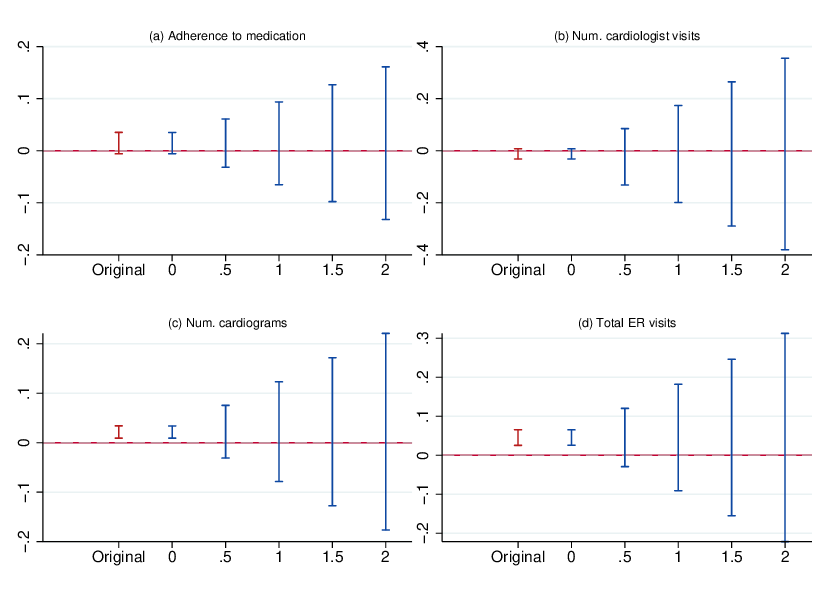}
\caption{HonestDiD sensitivity for \textbf{patients age} $<75$. Panels (a--d): $\beta$-blocker adherence, cardiology visit, echocardiogram, ER visit. 95\% CIs shown across sensitivity values $M$ (allowed deviation from parallel pre-trends); $M=0$ reproduces the Sun--Abraham baseline.}
    \label{fig:honestdid_lt75}
\end{figure}


\begin{figure}[htbp]
  \centering
  \includegraphics[width=.82\linewidth,trim=0pt 0pt 0pt 0pt,clip]{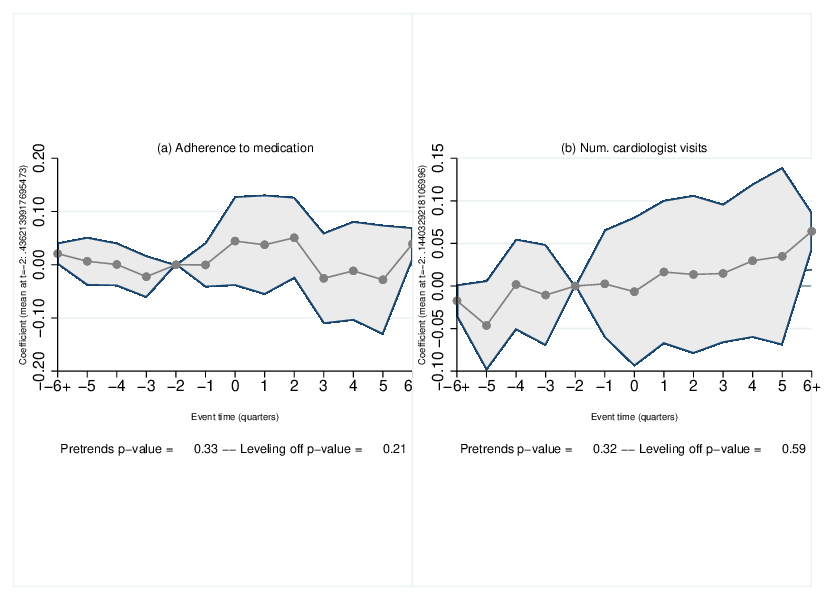}
  \includegraphics[width=.82\linewidth,trim=0pt 0pt 0pt 0pt,clip]{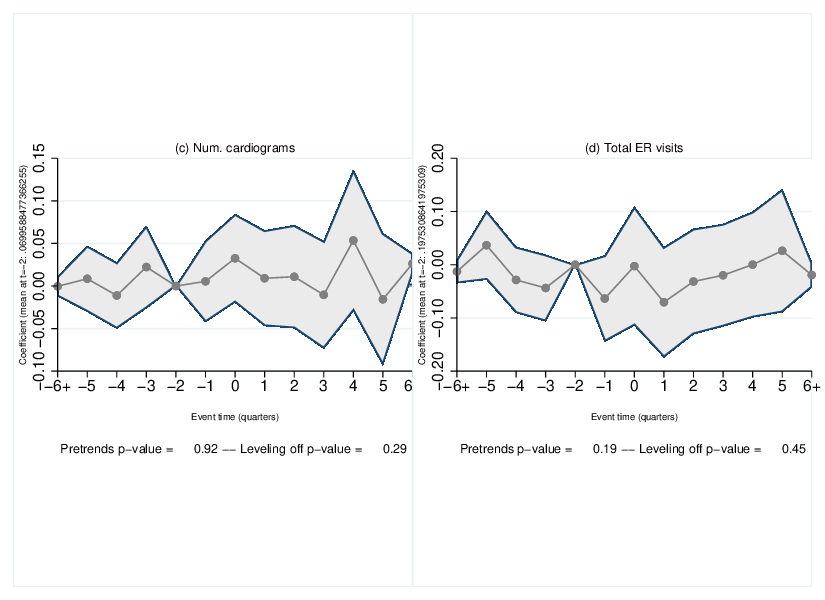}
\caption{NPCP effects for \textbf{patients age} $\ge 75$ via Sun--Abraham (2021) event study. Event time since first CHF hospitalization ($\tau=0$; ref $=-2$). Panels (a--d): $\beta$-blocker adherence (PDC $\ge 75\%$), cardiology visit, echocardiogram, ER visit. Points with 95\% CIs (SEs clustered by patient). “Pre-trends”/“Leveling-off”: Wald $p$-values; y-axis parentheses show the reference-period mean.}
  \label{fig:ES_SA_ot75}
\end{figure}

\begin{figure}[htbp]
    \centering
    \includegraphics[width=\textwidth]{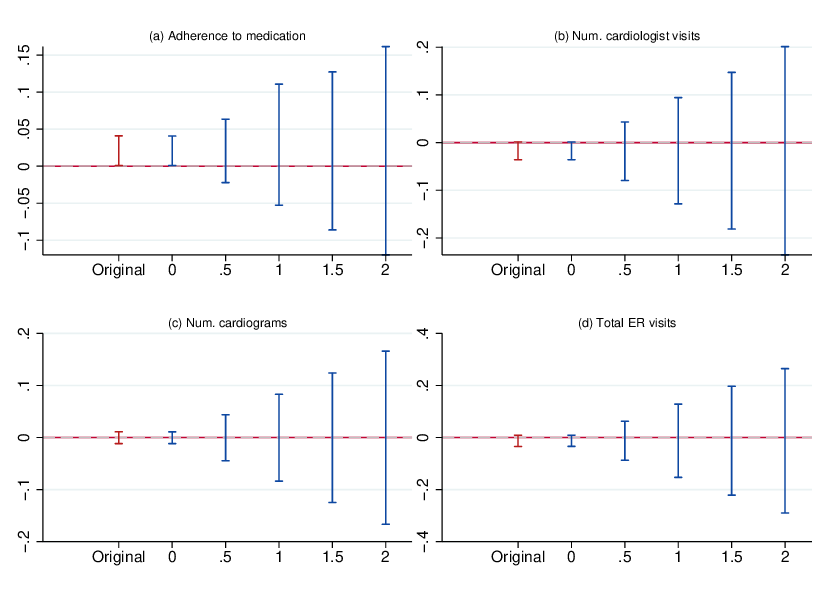}
\caption{HonestDiD sensitivity for \textbf{patients age} $\ge75$. Panels (a--d): $\beta$-blocker adherence, cardiology visit, echocardiogram, ER visit. 95\% CIs across sensitivity $M$ (deviation from parallel pre-trends); $M=0$ equals the Sun--Abraham baseline.}
    \label{fig:honestdid_ge75}
\end{figure}

\begin{figure}[htbp]
  \centering
  \includegraphics[width=.82\linewidth,trim=0pt 0pt 0pt 0pt,clip]{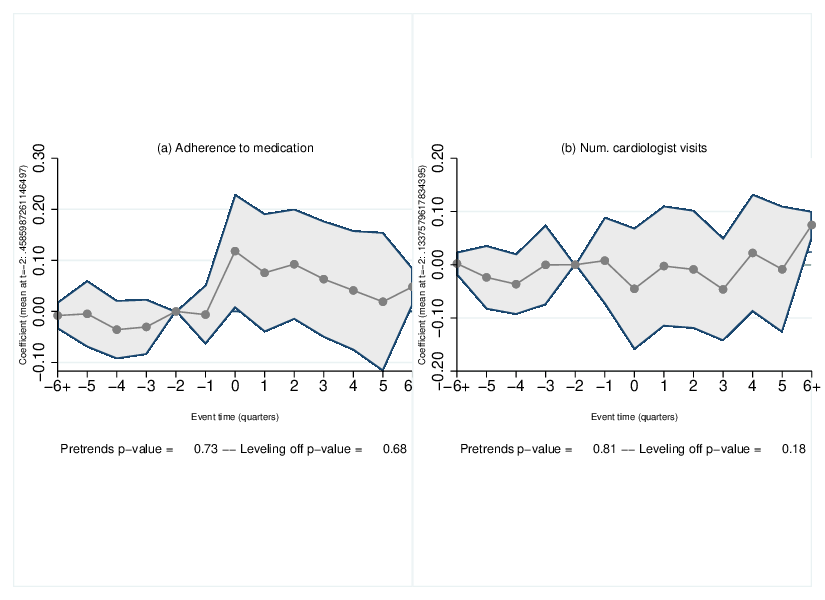}
  \includegraphics[width=.82\linewidth,trim=0pt 0pt 0pt 0pt,clip]{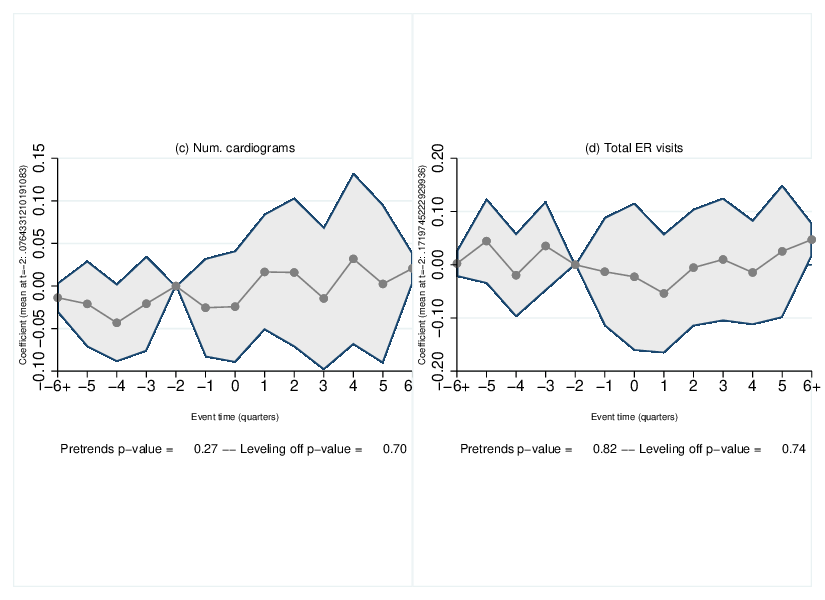}
\caption{NPCP effects for \textbf{female patients} via Sun--Abraham (2021) event study. Event time since first CHF hospitalization ($\tau=0$; ref $=-2$). Panels (a--d): $\beta$-blocker adherence (PDC $\ge 75\%$), cardiology visit, echocardiogram, ER visit. Points with 95\% CIs (SEs clustered by patient). “Pre-trends”/“Leveling-off”: Wald $p$-values; parentheses show the reference-period mean.}
  \label{fig:ES_SA_fem}
\end{figure}

\begin{figure}[htbp]
    \centering
    \includegraphics[width=\textwidth]{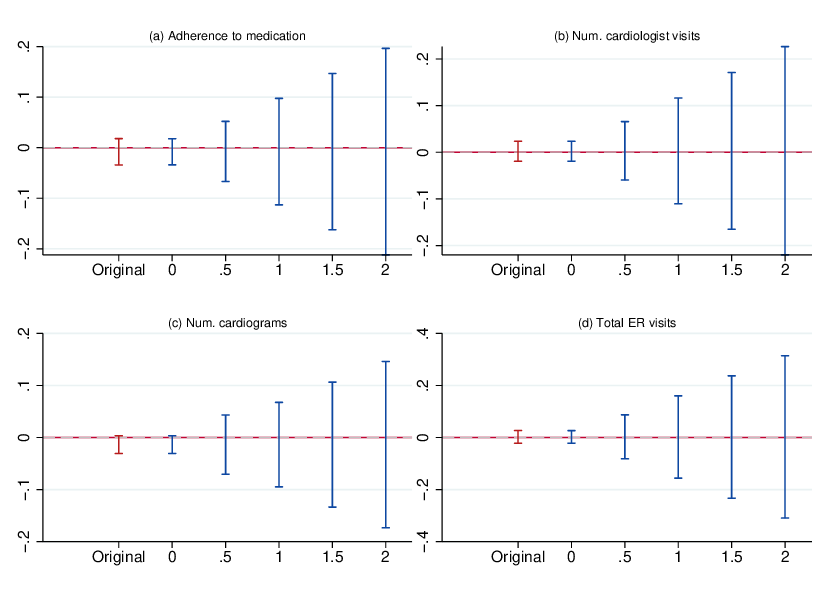}
\caption{HonestDiD sensitivity for \textbf{female patients}. Panels (a--d): $\beta$-blocker adherence, cardiology visit, echocardiogram, ER visit. 95\% CIs across sensitivity $M$ (allowed deviation from parallel pre-trends); $M=0$ equals the Sun--Abraham baseline.}
    \label{fig:honestdid_female}
\end{figure}


\begin{figure}[htbp]
  \centering
  \includegraphics[width=.82\linewidth,trim=0pt 0pt 0pt 0pt,clip]{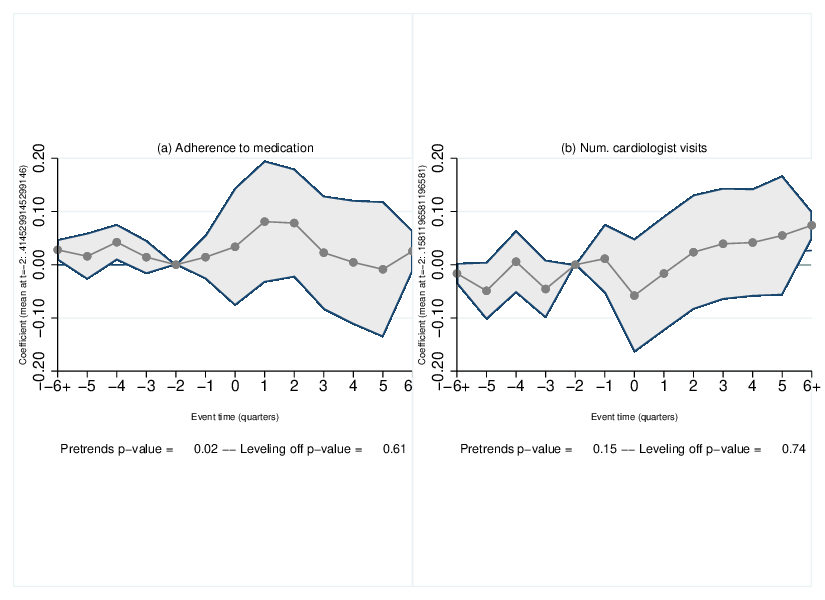}
  \includegraphics[width=.82\linewidth,trim=0pt 0pt 0pt 0pt,clip]{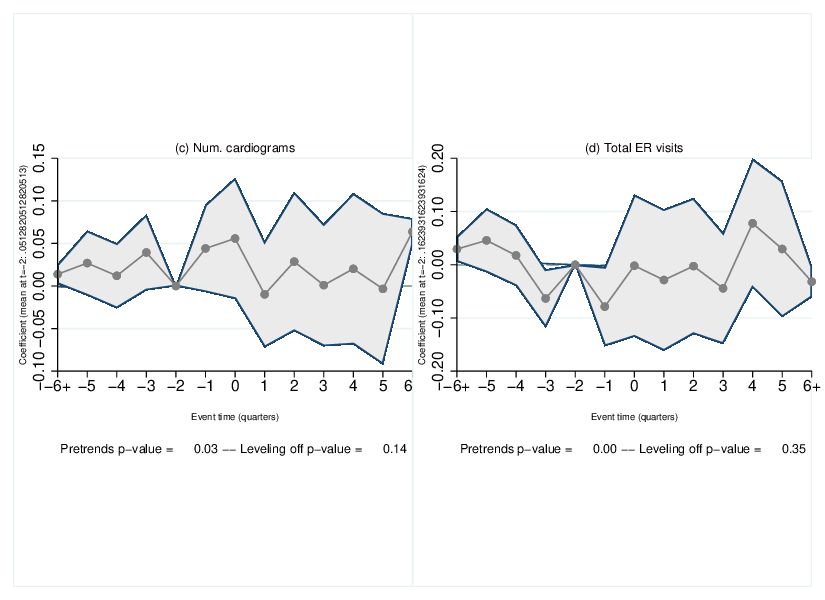}
\caption{NPCP effects for \textbf{male patients} via Sun--Abraham (2021) event study. Event time since first CHF hospitalization ($\tau=0$; ref $=-2$). Panels (a--d): $\beta$-blocker adherence (PDC $\ge 75\%$), cardiology visit, echocardiogram, ER visit. Points with 95\% CIs (SEs clustered by patient). “Pre-trends”/“Leveling-off”: Wald $p$-values; parentheses show the reference-period mean.}
  \label{fig:ES_SA_m}
\end{figure}

\begin{figure}[htbp]
    \centering
    \includegraphics[width=\textwidth]{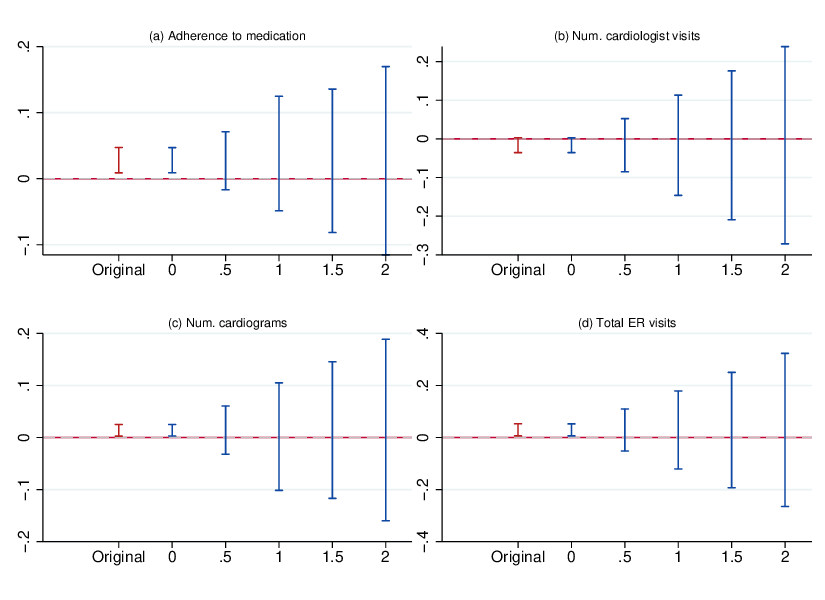}
\caption{HonestDiD sensitivity for \textbf{male patients}. Panels (a--d): $\beta$-blocker adherence, cardiology visit, echocardiogram, ER visit. 95\% CIs across sensitivity $M$ (deviation from parallel pre-trends); $M=0$ equals the Sun--Abraham baseline.}
    \label{fig:honestdid_male}
\end{figure}


\begin{figure}[htbp]
  \centering
  \includegraphics[width=.82\linewidth,trim=0pt 0pt 0pt 0pt,clip]{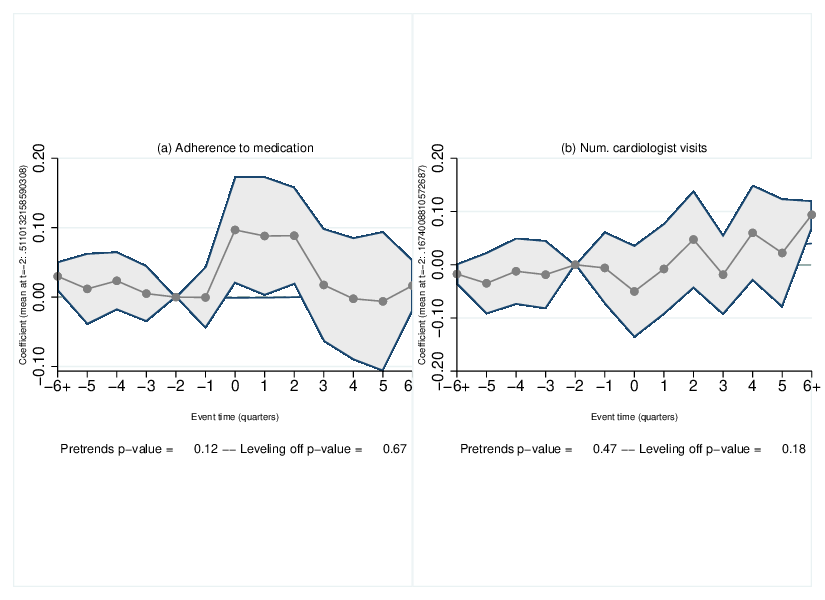}
  \includegraphics[width=.82\linewidth,trim=0pt 0pt 0pt 0pt,clip]{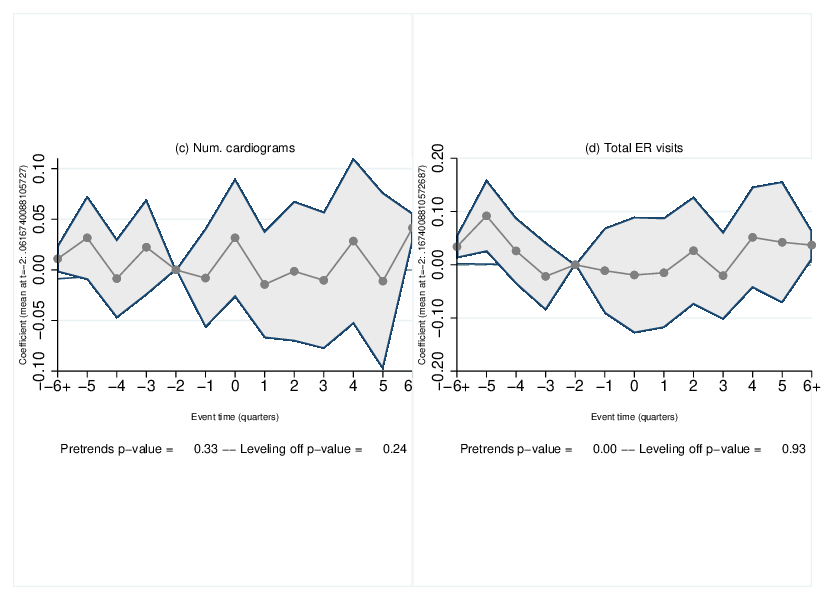}
\caption{NPCP effects for \textbf{high clinical complexity (MCDS $\ge5$)} via Sun--Abraham (2021) event study. Event time since first CHF hospitalization ($\tau=0$; ref $=-2$). Panels (a--d): $\beta$-blocker adherence (PDC $\ge75\%$), cardiology visit, echocardiogram, ER visit. Points $=$ IW ATTs with 95\% CIs (SEs clustered by patient). “Pre-trends”/“Leveling-off”: Wald $p$-values; parentheses show the reference-period mean.}
  \label{fig:ES_SA_MCDS_poor}
\end{figure}

\begin{figure}[htbp]
    \centering
    \includegraphics[width=\textwidth]{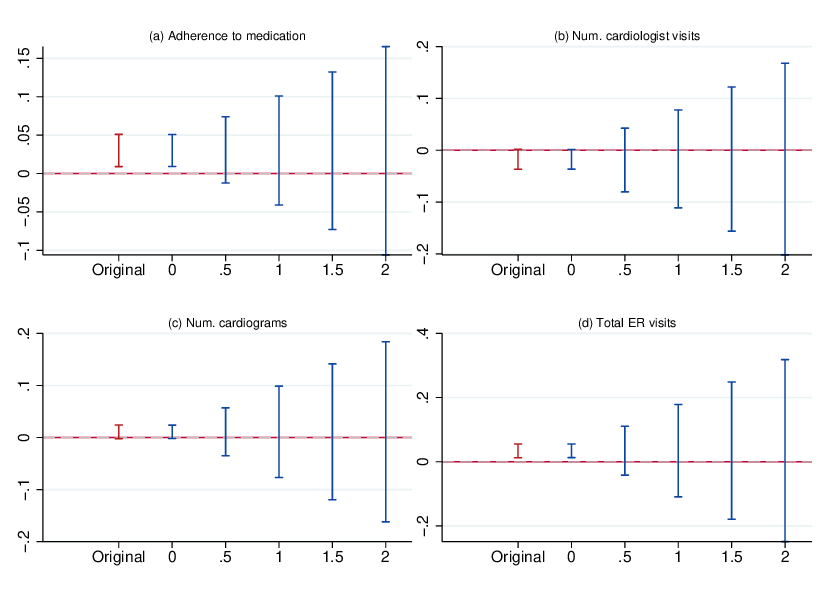}
\caption{HonestDiD sensitivity for \textbf{high comorbidity (MCDS $\ge5$)}. Panels (a--d): $\beta$-blocker adherence, cardiology visit, echocardiogram, ER visit. 95\% CIs across sensitivity $M$ (departure from parallel pre-trends); $M=0$ equals the Sun--Abraham baseline.}
    \label{fig:honestdid_MCDS_poor}
\end{figure}


\begin{figure}[htbp]
  \centering
  \includegraphics[width=.82\linewidth,trim=0pt 0pt 0pt 0pt,clip]{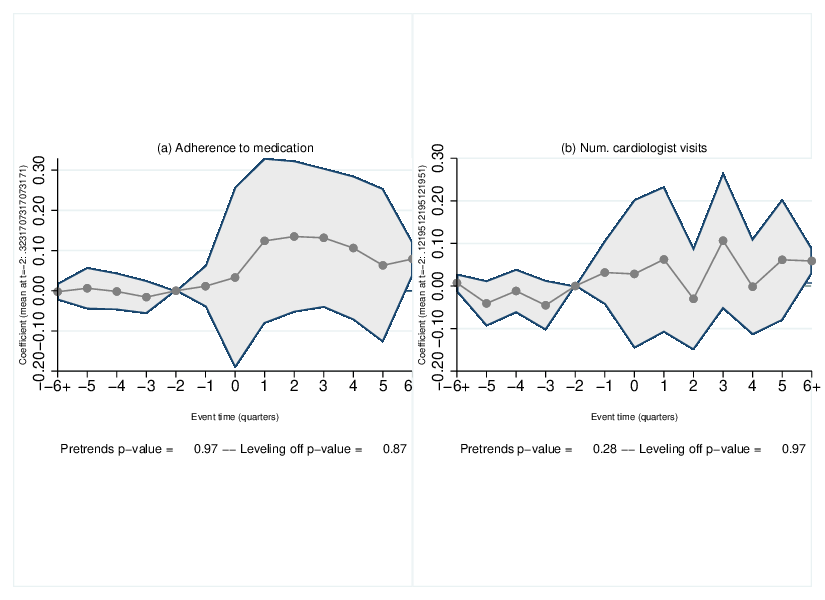}
  \includegraphics[width=.82\linewidth,trim=0pt 0pt 0pt 0pt,clip]{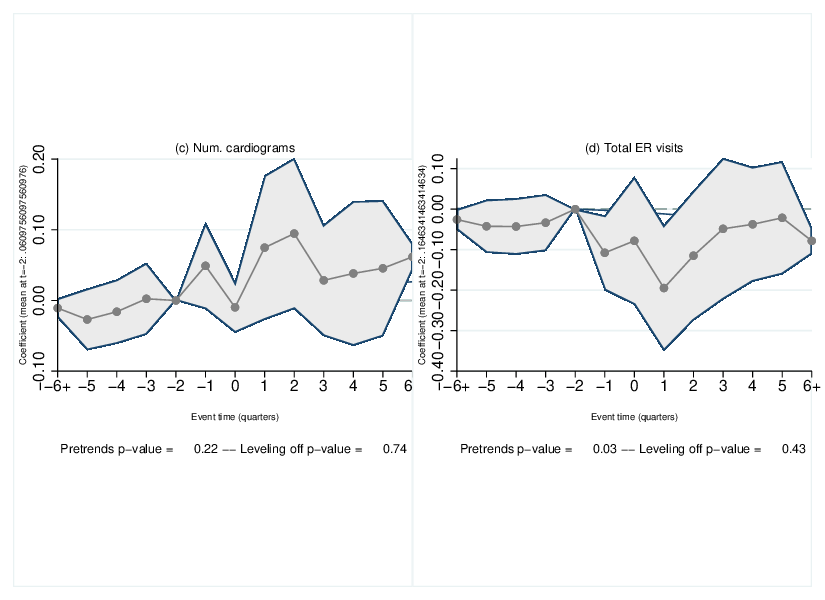}
\caption{NPCP effects for \textbf{lower clinical complexity (MCDS $\le4$)} via Sun--Abraham (2021) event study. Event time since first CHF hospitalization ($\tau=0$; ref $=-2$). Panels (a--d): $\beta$-blocker adherence (PDC $\ge75\%$), cardiology visit, echocardiogram, ER visit. Points with 95\% CIs (SEs clustered by patient). “Pre-trends”/“Leveling-off”: Wald $p$-values; parentheses show the reference-period mean.}
  \label{fig:ES_SA_MCDS_good}
\end{figure}

\begin{figure}[htbp]
    \centering
    \includegraphics[width=\textwidth]{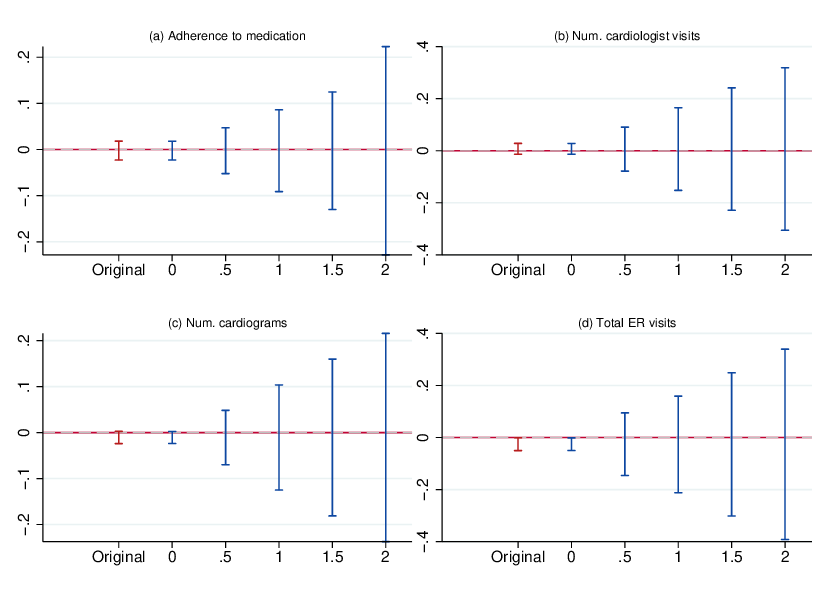}
\caption{HonestDiD sensitivity for \textbf{low comorbidity (MCDS $\le4$)}. Panels (a--d): $\beta$-blocker adherence, cardiology visit, echocardiogram, ER visit. 95\% CIs across sensitivity $M$ (deviation from parallel pre-trends); $M=0$ equals the Sun--Abraham baseline.}
    \label{fig:honestdid_MCDS_good}
\end{figure}

\clearpage



\renewcommand{\thefigure}{F\arabic{figure}}
\renewcommand{\thetable}{F\arabic{table}}
\setcounter{figure}{0}
\setcounter{table}{0}

\section*{Appendix F: Tables}
\label{Appendix F: Tables}

\begin{table}[htbp]\centering
\caption{Minimum Detectable Effects (MDE): Average Post Effect (Event Times 0--5)}
\label{tab:mde_avgpost}
\begin{tabular}{lccc}
\toprule
Outcome & Post window & SE (avg post) & MDE (avg post) \tabularnewline
\midrule
Medication adherence (beta-blockers) & p0--p5 &    0.0179 &    0.0503 \tabularnewline
Cardiologist visits & p0--p5 &    0.0160 &    0.0448 \tabularnewline
Echocardiograms & p0--p5 &    0.0126 &    0.0352 \tabularnewline
ER visits & p0--p5 &    0.0184 &    0.0515 \tabularnewline
\bottomrule
\end{tabular}
\vspace{0.2em}
\begin{minipage}{0.99\linewidth}\footnotesize
Notes: The table reports the standard error (SE) of the average post-treatment effect over event times 0--5 (inclusive) from the saved Sun--Abraham event-study specification. 
The MDE is computed as $\left(z_{1-\alpha/2} + z_{\text{power}}\right) \times \text{SE}$ with $\alpha = 0.05$ (two-sided) and power = 0.80. 
The implied multiplier equals  2.802.
\end{minipage}
\end{table}


\begin{table}[htbp]
\centering
\small
\setlength{\tabcolsep}{5pt}
\caption{This table reports joint tests for the presence of pre-treatment trends in each outcome. For each outcome (adherence, cardiologist visits, echocardiograms, ER visits), the table shows the F-statistic and p-value from a joint null hypothesis test that all pre-treatment event-time coefficients are zero (based on the Callaway–Sant’Anna, 2021 method). A higher p-value indicates a failure to reject the null of no pre-trend (i.e., parallel trends cannot be rejected). ** denotes significance at the 5\% level.}
\label{tab:pre-trend_CS}
\begin{tabular}{lrrrr}
\toprule
&  Adherence (PDC$>75\%$) & Cardiologist visit & Echocardiogram & ER visit \\
\midrule
$F$-statistic & 1.6511 & 8.8929 & 3.8136 & 30.2204$^{**}$ \\
$p$-value     & 0.8950 & 0.1134 & 0.5766 & 0.0000$^{**}$ \\
\bottomrule
\end{tabular}
\begin{flushleft}
\footnotesize\textit{} 
\end{flushleft}
\end{table}

\begin{table}[htbp]
\centering
\caption{This table reports Sun--Abraham (2021) interaction-weighted event-study coefficients estimating the effect of NPCP enrollment. Coefficients are indexed by event time relative to the quarter of the first CHF hospitalization ($t=0$) and are reported separately for pre- and post-hospitalization periods, including a pooled $6+$ quarters post-hospitalization bin. Outcomes are $\beta$-blocker adherence, cardiologist visits, echocardiograms, and ER visits.}
\begin{center}
\resizebox{\textwidth}{!}{%
\begin{tabular}{lcccc}
\toprule
& \textbf{Adherence to medication} & \textbf{Num. cardiologist Visit} & \textbf{Num. echocardiogram} & \textbf{Tot. access to ER}  \\
\midrule
6+ periods                    &       0.009   &      -0.010  &       0.002   &       0.016   \\
&     (0.008)   &     (0.006)   &     (0.005)   &     (0.009)   \\
5th period before treatment   &       0.003  &      -0.041*  &       0.007   &       0.042  \\
 &     (0.018)   &     (0.020)   &     (0.016)  &     (0.024)   \\
4th period before treatment   &       0.006   &      -0.016   &      -0.011   &       0.003   \\
&     (0.016)&      (0.020)&       (0.015)&      (0.024)  \\
3rd period before treatment   &      -0.011   &      -0.031  &       0.014&      -0.022  \\
&     (0.015)&      (0.022)&      (0.018)&      (0.024)   \\
1st period before treatment   &       0.004   &       0.011   &       0.016&        -0.053   \\
 &     (0.017)&       (0.025)&       (0.020)&      (0.031)   \\
0   period assign treatment   &       0.056&      -0.042&         0.020   &      -0.009  \\
 &     (0.039)&     (0.040)&     (0.025)&       (0.048)   \\
1st period after treatment    &       0.061&       0.002&        0.005   &      -0.035  \\
  &     (0.042)&       (0.039)&      (0.024)&      (0.045)   \\
2nd period after treatment    &       0.078*  &       0.016  &       0.024&         0.008  \\
 &     (0.038)&     (0.039)&      (0.031)&      (0.045)   \\
3rd period after treatment    &       0.029&        0.003&      -0.002&       -0.013   \\
 &     (0.041)&      (0.038)&     (0.028)&      (0.039)   \\
4th period after treatment    &       0.017&        0.028&       0.029   &       0.038   \\
  &     (0.043)&     (0.037)&      (0.034)&     (0.042)  \\
5th period after treatment    &      -0.004&        0.018&        0.002&         0.023   \\
  &     (0.048)&       (0.042)&       (0.034)&      (0.047)   \\
6+ periods                    &       0.032*  &       0.073***&       0.046***&       0.003   \\
  &     (0.015)&      (0.007)&     (0.006)&     (0.011)  \\

\bottomrule
\end{tabular}
}
\parbox{163mm} {\footnotesize {\em Standard errors in parentheses. $^{*}$ p$<$0.1, $^{**}$ p$<$0.05, $^{***}$ p$<$0.01.}
}
\label{tab:EST2}
\end{center}

\end{table}

\begin{table}[htbp]
\centering
\caption{Event-study estimates of the shock effect around the first CHF hospitalization (adverse shock) using the Sun--Abraham (2021) interaction-weighted estimator. Coefficients are reported by event time relative to the index quarter for $\beta$-blocker adherence, cardiologist visits, echocardiograms, and ER visits.}
\begin{center}
\resizebox{\textwidth}{!}{%
\begin{tabular}{lcccc}
\toprule
& \textbf{Adherence to medication} & \textbf{Num. cardiologist Visit} & \textbf{Num. echocardiogram} & \textbf{Tot. access to ER}  \\
\midrule
6+ periods     &      -0.010** &      -0.014***&      -0.002   &      -0.054***\\
            &     (0.004)   &     (0.003)   &     (0.002)   &     (0.004)   \\
5th period before treatment   &      -0.006   &      -0.014***&      -0.005*  &      -0.044***\\
      &     (0.003)   &     (0.003)   &     (0.002)  &     (0.004)   \\
4th period before treatment   &      -0.004   &      -0.005   &      -0.002   &      -0.032***\\
    &     (0.003)   &     (0.003)   &     (0.002)   &     (0.004)   \\
3rd period before treatment   &      -0.005   &      -0.007*  &      -0.002   &      -0.026***\\
     &     (0.002)   &     (0.003)   &     (0.002)  &     (0.004)   \\
1st period before treatment   &       0.015***&       0.041***&       0.013***&       0.265***\\
  &     (0.003)  &     (0.004)   &     (0.003)   &     (0.005)   \\
0   period assign treatment   &       0.152***&       0.072***&       0.021***&       0.186***\\
  &     (0.004)   &     (0.004)   &     (0.003)  &     (0.005)  \\
1st period after treatment    &       0.134***&       0.055***&       0.018***&       0.071***\\
 &     (0.005)   &     (0.004)   &     (0.003)   &     (0.005)   \\
2nd period after treatment    &       0.133***&       0.057***&       0.025***&       0.046***\\
 &     (0.005)   &     (0.004)  &     (0.003)  &     (0.005)   \\
3rd period after treatment    &       0.131***&       0.050***&       0.015***&       0.047***\\
&     (0.006)  &     (0.005)   &     (0.003)   &     (0.006)   \\
4th period after treatment    &       0.127***&       0.041***&       0.010** &       0.039***\\
 &     (0.006)   &     (0.005)   &     (0.003)   &     (0.006)   \\
5th period after treatment    &       0.115***&       0.032***&       0.005   &       0.041***\\
 &     (0.007)  &     (0.005)   &     (0.004)  &     (0.006)   \\
6+ periods                    &       0.104***&       0.012*  &      -0.001  &       0.042***\\
 &     (0.008)   &     (0.006)  &     (0.004)   &     (0.007)  \\
\bottomrule
\end{tabular}
}
\parbox{163mm} {\footnotesize {\em Standard errors in parentheses. $^{*}$ p$<$0.1, $^{**}$ p$<$0.05, $^{***}$ p$<$0.01.}
}
\label{tab:EST3}
\end{center}

\end{table}
\clearpage

\section*{Acknowledgements}
This research is part of a joint project of the Local Health Authority (LHA) of Romagna and of the Department of Economics of the University of Bologna “Evaluation of alternative organisational solutions for the management of chronic patients in the LHA of Romagna”. The authors thank Roberto Grilli, Head of the Unit “Evaluative Research and Health Services Policy” for his help in the analysis. The authors also thank Simona Rosa and Giulia Guidi for their support in the preparation of the data. The views expressed remain exclusively those of the authors.

\section*{Declaration}
We confirm that the work described has not been published previously, except in the form of a registered report, in accordance with the journal’s policy on multiple, redundant, or concurrent publication. The article is not under consideration for publication elsewhere. The article’s publication is approved by all authors and, tacitly or explicitly, by the responsible authorities where the work was carried out. If accepted, the article will not be published elsewhere in the same form, in English or in any other language, including electronically, without the written consent of the copyright-holder.

\section*{Data statement}
Data used in this study were obtained from the administrative datasets of the LHA of Romagna. The anonymized dataset covers the health consumption of CHF patients who were residents in the LHA of Romagna between 2017 and 2023 and provides information on patients’ characteristics, hospital services, ambulatory services, pharmaceutical consumption and general practitioners (GPs). This information is collected quarterly, and includes 17835 individuals, with 495936 observations across an average of 27.80 quarters. Data are not publicly available, but were made available to the authors, under license, from the LHA of Romagna after anonymization.

\section*{Generative AI declaration}
During the preparation of this manuscript, the authors used ChatGPT (OpenAI) to assist with language editing and text organization. The authors reviewed and edited all suggestions and take full responsibility for the final content.

\section*{Disclosure of interest}
The authors declare that they have no known financial or non-financial competing interests that could have influenced the work reported in this paper.

\section*{Funding}
This work was supported by the European Union – Next Generation EU – Age-It under Grant B83C22004800006; and by the Italian Complementary National Plan – DARE under Grant B53C2200645001.

\section*{References}
\begin{description}

\item Bahit, M.C., Korjian, S., Daaboul, Y., Baron, S., Bhatt, D.L., Kalayci, A., Chi, G., Nara, P., Shaunik, A. and Gibson, C.M., 2023. Patient adherence to secondary prevention therapies after an acute coronary syndrome: a scoping review.
\emph{Clinical therapeutics}, 45, 1119--1126. \DOI{10.1016/j.clinthera.2023.08.011}.

\item Callaway, B., Sant’Anna, P.H.C., 2021. Difference-in-differences with multiple time periods. J. Econometrics 225, 200–230. \DOI{10.1016/j.jeconom.2020.12.001}.

\item Chaudhry, S.I., Mattera, J.A., Curtis, J.P., Spertus, J.A., Herrin, J., Lin, Z., Phillips, C.O., Hodshon, B.V., Cooper, L.S., Krumholz, H.M., 2010. Telemonitoring in patients with heart failure. N. Engl. J. Med. 363, 2301–2309. \\
\DOI{10.1056/NEJMoa1010029}.

\item Chyn, E., Frandsen, B. and Leslie, E., 2025. Examiner and judge designs in economics: a practitioner’s guide.
\emph{Journal of Economic Literature}, 63, 401--439.

\item Conferenza Stato-Regioni (2024). Intesa sull’“Ipotesi di Accordo Collettivo Nazionale per la disciplina dei rapporti con i medici di medicina generale– Triennio 2019–2021” (Rep. Atti n. 51/CSR, 4 April 2024; agreement signed 8 February 2024).

\item Darden, M., 2017. Smoking, expectations, and health: a dynamic stochastic model of lifetime smoking behavior.
\emph{Journal of Political Economy}, 125, 1465--1522. \\
\DOI{10.1086/693394}.

\item De Belvis, A.G., Meregaglia, M., Morsella, A., Adduci, A., Perilli, A., Cascini, F., Solipaca, A., Fattore, G., Ricciardi, W., D’Agostino, M. and Maresso, A., 2024. Italy:
\emph{health system summary}, 2024.

\item de Chaisemartin, C., D’Haultf\oe uille, X., 2020. Two-way fixed effects estimators with heterogeneous treatment effects. Am. Econ. Rev. 110, 2964–2996. \\
\DOI{10.1257/aer.20181169}.

\item Dobkin, C., Finkelstein, A., Kluender, R., Notowidigdo, M.J., 2018. The economic consequences of hospital admissions. Am. Econ. Rev. 108, 308–352. \\
\DOI{10.1257/aer.20161038}.

\item Fondazione GIMBE, 2025.
Medici di famiglia a rischio estinzione: ne mancano oltre 5.500, il 52\% è sovraccarico di assistiti, 7.300 andranno in pensione entro il 2027.
Press release, 4 March 2025.

\item Fadlon, I. and Nielsen, T.H., 2019. Family health behaviors. 
\emph{American Economic Review}, 109, 3162--3191. \DOI{10.1257/aer.20171993}.

\item Fadlon, I. and Nielsen, T.H., 2021. Family labor supply responses to severe health shocks: Evidence from Danish administrative records.
\emph{American Economic Journal: Applied Economics}, 13, 1--30.

\item Freyaldenhoven, S., Hansen, C.B., Pérez, J.P., Shapiro, J.M., Carreto, C., 2025. xtevent: Estimation and visualization in the linear panel event-study design. Stata J. 25, 97–135. \DOI{10.1177/1536867X251322964}.

\item Freyaldenhoven, S., Hansen, C., Pérez, J.P., Shapiro, J.M., 2021. Visualization, identification, and estimation in the linear panel event-study design. NBER Working Paper No. 29170. National Bureau of Economic Research, Cambridge, MA. \DOI{10.3386/w29170}.

\item Goldsmith-Pinkham, P., Hull, P. and Kolesár, M., 2025. Leniency Designs: An Operator’s Manual (No. w34473). National Bureau of Economic Research.

\item Goodman-Bacon, A., 2021. Difference-in-differences with variation in treatment timing. J. Econometrics 225, 254–277. \DOI{10.1016/j.jeconom.2021.03.014}.

\item Groenewegen, A., Rutten, F.H., Mosterd, A., Hoes, A.W., 2020. Epidemiology of heart failure. Eur. J. Heart Fail. 22, 1342–1356. \DOI{10.1002/ejhf.1858}.

\item Grossman, M., 1972. On the concept of health capital and the demand for health. J. Polit. Econ. 80, 223–255. \DOI{10.1086/259880}.

\item Hoagland, A., 2025. An Ounce of Prevention or a Pound of Cure? The Value of Health Risk Information. 
\emph{Review of Economics and Statistics}, 1--45. \\
\DOI{10.1162/rest.a.276}.

\item Iommi, M., Rosa, S., Fusaroli, M., Rucci, P., Fantini, M.P. and Poluzzi, E., 2020. Modified-Chronic Disease Score (M-CDS): Predicting the individual risk of death using drug prescriptions. \emph{PLoS One}, 15, p.e0240899.

\item Inglis, S.C., Clark, R.A., Dierckx, R., Prieto-Merino, D., Cleland, J.G., 2015. Structured telephone support or non-invasive telemonitoring for patients with heart failure. Cochrane Database Syst. Rev. 10, CD007228. \\
\DOI{10.1002/14651858.CD007228.pub3}.

\item Ito, M., Tajika, A., Toyomoto, R., Imai, H., Sakata, M., Honda, Y., Kishimoto, S., Fukuda, M., Horinouchi, N., Sahker, E. and Furukawa, T.A., 2024. The short and long-term efficacy of nurse-led interventions for improving blood pressure control in people with hypertension in primary care settings: a systematic review and meta-analysis. 
\emph{BMC Primary Care}, 25, p.143.

\item Little, R.J., Rubin, D.B., 2019. Statistical Analysis with Missing Data, third ed. John Wiley \& Sons, Hoboken. \DOI{10.1002/9781119482260}.

\item Lizcano‐Álvarez, Á., Carretero‐Julián, L., Talavera‐Saez, A., Cristóbal‐Zárate, B., Cid‐Expósito, M.G., Alameda‐Cuesta, A., REccAP Group (Red de Enfermería de Cuidados Cardiovasculares en Atención Primaria), Gómez Menor, C., Dionisio Benito, J., Gómez Puente, J.M. and López Köllmer, L., 2023. Intensive nurse‐led follow‐up in primary care to improve self‐management and compliance behaviour after myocardial infarction. Nursing Open, 10(8), pp.5211-5224.

\item Ong, M.K., Romano, P.S., Edgington, S., Aronow, H.U., Auerbach, A.D., Black, J.T., De Marco, T., Escarce, J.J., Evangelista, L.S., Hanna, B., Ganiats, T.G., 2016. Effectiveness of remote patient monitoring after discharge of hospitalized patients with heart failure: The BEAT-HF randomized clinical trial. JAMA Intern. Med. 176, 310–318. \DOI{10.1001/jamainternmed.2015.7712}.

\item Oster, E., 2018. Diabetes and diet: Purchasing behavior change in response to health information.
\emph{American Economic Journal: Applied Economics}, 10, 308--348. \\
\DOI{10.1257/app.20160232}

\item Rambachan, A., Roth, J., 2023. A more credible approach to parallel trends. Rev. Econ. Stud. 90, 2471–2507. \DOI{10.1093/restud/rdad018}.

\item Struttura Interregionale Sanitari Convenzionati (SISAC) (2018). Accordo Collettivo Nazionale per la disciplina dei rapporti con i medici di medicina generale (Triennio 2016–2018). Rep. Atti n. 112/CSR, 21 June 2018.

\item Sun, L., Abraham, S., 2021. Estimating dynamic treatment effects in event studies with heterogeneous treatment effects. J. Econometrics 225, 175–199. \\
\DOI{10.1016/j.jeconom.2020.09.006}.

\item Takeda, A., Taylor, S.J.C., Taylor, R.S., Khan, F., Krum, H., Underwood, M., Richards, A.M., 2012. Clinical service organisation for heart failure. Cochrane Database Syst. Rev. 12, CD002752. \DOI{10.1002/14651858.CD002752.pub3}.

\item Taubman, S.L., Allen, H.L., Wright, B.J., Baicker, K. and Finkelstein, A.N., 2014. Medicaid increases emergency-department use: evidence from Oregon's Health Insurance Experiment. 
\emph{Science}, 343, 263--268. \DOI{10.1126/science.1246183}

\item Wang, Q., Shen, Y., Chen, Y. and Li, X., 2019. Impacts of nurse-led clinic and nurse-led prescription on hemoglobin A1c control in type 2 diabetes: a meta-analysis.
\emph{Medicine}, 98, p.e15971.

\item Wu, X., Li, Z., Tian, Q., Ji, S. and Zhang, C., 2024. Effectiveness of nurse-led heart failure clinic: A systematic review. 
\emph{International Journal of Nursing Sciences}, 11, 315--329.

\end{description}

\end{document}